%
%
%
%
\RequirePackage{fix-cm}
\documentclass[smallextended]{svjour3}       
\smartqed  

\usepackage{graphicx}

\usepackage[utf8]{inputenc}
\usepackage{amsfonts}
\usepackage{amssymb}
\usepackage{natbib}
\usepackage{mathtools}
\usepackage{dsfont}
\usepackage{makeidx}
\usepackage{epstopdf}
\usepackage{float}
\usepackage{booktabs}
\usepackage{ragged2e}
\usepackage{url}
\usepackage{csquotes}
\usepackage{color}
\newcommand{\blue}{\textcolor{black}}

\DeclareMathOperator*{\argmax}{arg\,max}
\newcommand{\RR}{{\mathbb R}}

 \journalname{Journal of Mathematical Biology}

\usepackage{enumitem}

\begin{document}
\title{Combinatorial properties of phylogenetic diversity indices}

\author{Kristina Wicke       \and
				Mike Steel
}


\institute{K. Wicke \at
              Institute of Mathematics and Computer Science, University of Greifswald, Germany. Orchid ID: 0000-0002-4275-5546
           \and
     M. Steel \at				
					Biomathematics Research Centre, University of Canterbury, Christchurch, New Zealand (corresponding author). Orchid ID: 0000-0001-7015-4644\\
              \email{mike.steel@canterbury.ac.nz}
}

\date{Received: date / Accepted: date}

\maketitle

\begin{abstract}
Phylogenetic diversity indices provide a formal way  to apportion ‘evolutionary heritage’  across species.  Two natural diversity indices are Fair Proportion (FP) and Equal Splits (ES).  FP is also called `evolutionary distinctiveness' and, for rooted trees, is identical to the Shapley Value (SV), which arises from cooperative game theory. In this paper, we investigate the extent to which FP and ES can differ, characterise tree shapes on which the indices are identical, and study the equivalence of FP and SV and its implications in more detail. 
We also define and investigate analogues of these indices on unrooted trees (where SV was originally defined), including an index that is 
closely related to the Pauplin representation of phylogenetic diversity.
\keywords{Phylogenetic tree,  diversity index, Shapley value, biodiversity measures}
\end{abstract}

\newpage

\section{Introduction}
\label{intro_sec}

Phylogenetic trees play an important role in  quantifying biodiversity by estimating how much `evolutionary heritage' is captured by each species and thus how much may be lost due to the current high rates of species extinction. The concept that each extant species caries  a combination of unique and shared evolutionary history leads naturally to the notion of a phylogenetic diversity index  for each species, which depends on its placement in the underlying phylogenetic tree, which, when summed together (across all species), gives the total diversity of the tree \citep{Redding2008, Redding2014, Vellend2011}. For example, the reptile species {\em tuatara}, being the sole surviving species from the superorder Lepidosauria,  represents 220 million years of unique evolution as traced back to when this species branched off its phylogenetic tree from other lineages that have survived to the present.  This species also carries further evolutionary history that is shared with other extant species, and phylogenetic diversity indices quantify not only the unique evolutionary history, but shared history as well.
   
Methods to apportion the total evolutionary history of life (measured in time or in  genetic or trait diversity)  across present-day species can be implemented in various ways.  In this paper, we explore the mathematical relationship between three closely related indices.  Two of these indices -- (FP) Fair Proportion \citep{Redding2003} and  (ES) Equal Splits \citep{Redding2006} -- were described for rooted trees, while a third, the Shapley Value (SV), from cooperative game theory, was initially introduced for unrooted trees \citep{Haake2008}.  Soon afterwards it was shown that SV on {\em rooted} trees is actually equivalent to FP \citep{fuc} (see also \cite{sta}).
 These and other related indices, have been incorporated into the EDGE initiative by the  Zoological Society of London \citep{isa07} to quantify
  the expected loss of evolutionary history associated with different endangered species.

The structure of this paper is as follows. We first review some basic definitions, then define two of the indices (FP and ES). Next, we consider how different FP and ES can be from each other. We do this first by considering their ratios (FP/ES and ES/FP) to obtain concise exact results  (Theorem~\ref{thm1}) which apply regardless of whether or not a molecular clock assumption is imposed.   As a simple example of how these results apply, consider all rooted binary phylogenetic trees that classify (say) 20 species at their leaves and all possible assignments of edge lengths.  It is then possible for the ES index of a species to be up to 9 times larger (but no more) than the FP index for that species; on the other hand, the FP index of a species can be up to 13,797 times larger
(but no more) than the ES index of that species.

We then  consider how large the differences FP$-$ES and ES$-$FP can be, where now we need to bound some aspect of the tree length---either the longest edge length (Theorem~\ref{thm2}) or the total length of the tree (Theorem~\ref{thm3}). 
Companion results  are also derived for molecular clock trees. In Theorem~\ref{thm4},
we characterise the set of trees for which FP and ES are identical, and \blue{Section~\ref{sec-shap}} provides a proof that SV is uniquely characterized by four axioms on trees, by using the equivalence of FP and SV.  In Section \ref{sec_un},  we consider variants of FP and ES defined on unrooted trees and establish a number of results for these measures. We end by highlighting some questions for future work.

\subsection{Rooted trees and phylogenetic diversity indices}
\label{roo}
In this section and the next we deal with rooted phylogenetic $X$--trees. A rooted tree $T$ with leaf set $X$ is said to be a (rooted) phylogenetic $X$--tree if each non-leaf vertex  is unlabelled and has out-degree at least 2 (two such trees are considered identical if there is a graph isomorphism between them that sends leaf $x$ to leaf $x$ for each $x \in X$).  In the case where all of the non-leaf vertices have out-degree 2, we say that the tree is {\em binary}; we will  mostly work with this class in these two sections. Background on the basic combinatorics of phylogenetic trees can be found in \cite{Book_Mike}. 
For the rest of this paper we will take, without loss of generality, the leaf set $X$ of trees to be $X=[n]=\{1, \ldots, n\}$, where $n\geq 2$.

Throughout this section, let $T$ be a rooted binary phylogenetic tree with root $\rho$ and leaf set $[n]$, where each edge $e$ is assigned a non-negative length $l(e)$. 
 Let $L=L(T, l) = \sum\limits_el(e)$ be the total sum of edge lengths of $T$ (see Figure \ref{fig1}(a)).

Any function $\varphi_T: [n] \rightarrow \mathbb{R}$ such that $\sum_{i \in [n]} \varphi_T(i) = L(T, l)$ is called a \emph{phylogenetic diversity index} or \emph{PD index} for short. If $\varphi_T(i)$ can be written as a linear function on the edge lengths of $T$, i.e. 
\begin{equation}
\label{lindiv}
\varphi_T(i) = \sum_{e \in E(T)} \gamma_T(i,e) l(e)
\end{equation}
for coefficients $\gamma_T(i,e)$ that are independent of $l(e)$, we call $\varphi_T$ a \emph{linear diversity index}. 
In this paper, we will consider three linear PD indices, namely the Fair Proportion index, the Equal Splits index and the Shapley value.  
Note that an arbitrary function $\varphi_T$ of the form described in Eqn.~(\ref{lindiv}) is a  diversity index if and only if the following linear equations hold for the coefficients 
$\gamma_T(i,e)$, for each edge $e$ of $T$:
\begin{equation}
\label{lindiv2}
\sum_{i \in [n]} \gamma_T(i,e) =1.
\end{equation}

\subsection{Fair Proportion and Equal Splits}
The \emph{Fair Proportion (FP) index} \citep{Redding2003} for leaf $i \in [n]$ (also called `evolutionary distinctiveness') is defined as:
\begin{align}
FP_T(i) = \sum\limits_{e \in P(T; \rho, i)} \frac{1}{n(e)}l(e),
\end{align}
where $P(T; \rho, i)$ denotes the path in $T$ from the root to leaf $i$, $l(e)$ is the length of edge $e$ and $n(e)$ is the number of leaves descended from $e$. Essentially, the FP index distributes each edge length evenly among its descendant leaves. 
Note that as the order of summation in the definition of the FP index does not matter, we will often reverse the order and go from leaf $i$ to the root, since this is common biological practice.
As an example, for the tree $T$ shown in Fig.~\ref{fig1}(a) and the leaf $i=1$, we have $FP_T(i)=\frac{1}{1}+ \frac{1}{2}+\frac{1}{3}= \frac{11}{6}$.

A second natural index is the \emph{Equal Splits (ES) index} \citep{Redding2006}, where each edge length is distributed evenly at each branching point. It is defined as:
\begin{align}
ES_T(i) = \sum\limits_{e \in P(T; \rho, i)} \frac{1}{\Pi(e,i)}l(e),
\end{align}
where $\Pi(e,i) = 1$ if $e$ is a pendant edge incident with $i$; otherwise, if $e=(u,v)$ is an interior edge, then $\Pi(e, i)$ is the product of the out-degrees of the interior vertices on the directed path from $v$ to leaf $i$. 
Since we will be dealing with binary  trees in this paper, $\Pi(e,i)$ is 2 raised to the power of the number of edges between $e$ and leaf $i$. 
As an example, for the tree $T$ shown in Fig.~\ref{fig1}(a), and the leaf $i=1$, we have $ES_T(i)=\frac{1}{1}+ \frac{1}{2}+\frac{1}{4}= \frac{7}{4}$ (where we have again reversed the order of summation).

\begin{figure}[htbp]
\centering
\includegraphics[scale=0.2]{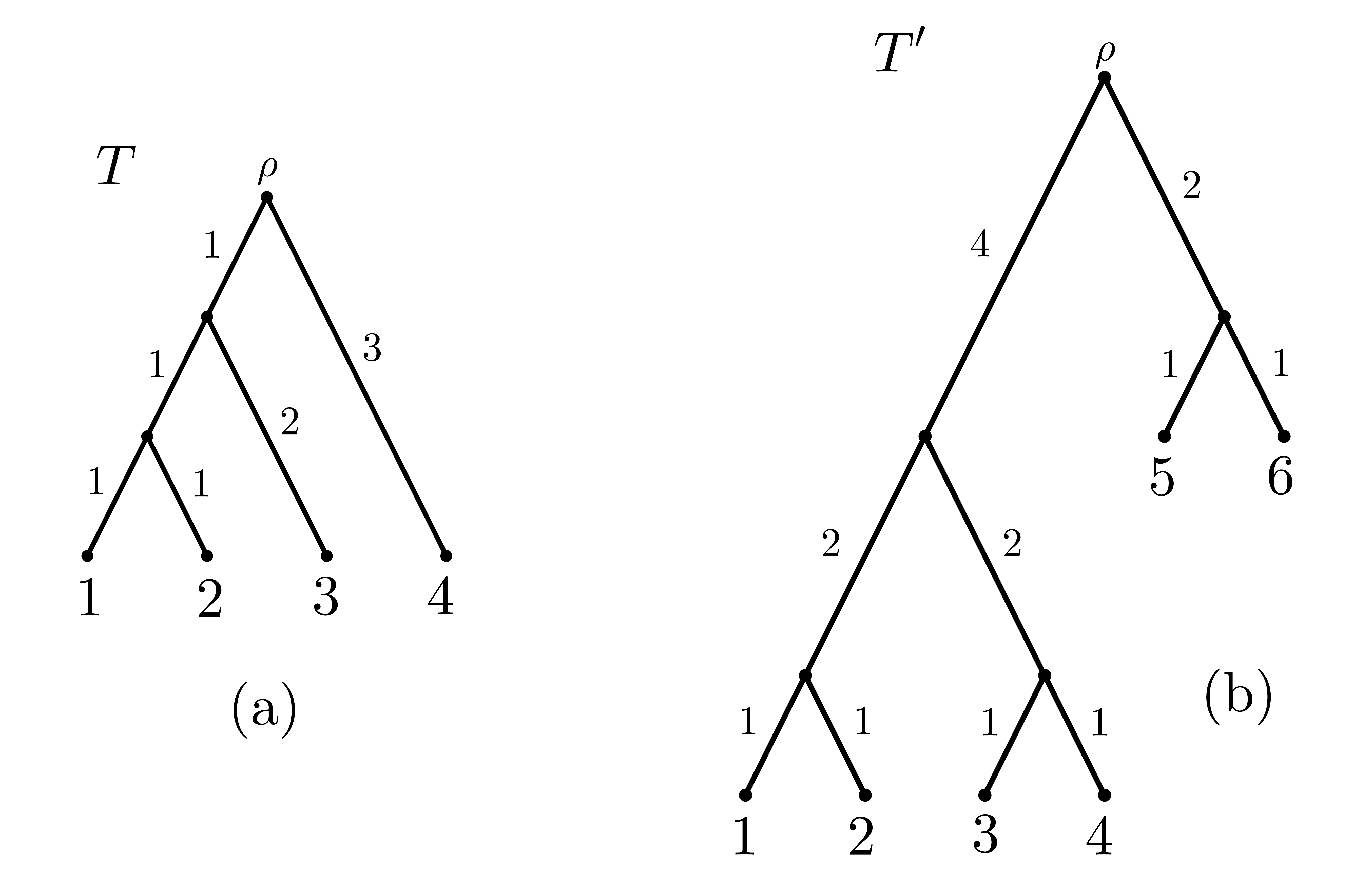}
\caption{(a) A rooted binary phylogenetic tree $T$ on leaf set $[4]$. 
We have $FP_{T}(1)=FP_{T}(2)=11/6, \, FP_{T}(3)=7/3$ and $FP_{T}(4)=3$. Similarly, $ES_{T}(1)=ES_{T}(2)=7/4, \, ES_{T}(3)=5/2$ and $ES_{T}(4)=3$. Note that the edge lengths of $T$ follow a `molecular clock', as all leaves have the same distance to the root. 
(b) A rooted binary phylogenetic tree $T'$ on leaf set $[6]$. For the subset $S=\{1,2,4\} \subseteq [6]$ of leaves, we have $PD_{\blue{T'}}(S) = 1+1+1+2+2+4 = 11$. Moreover, we have $FP_{T'}(i)=3$ for $i=1, \ldots, 4$ and $FP_{T'}(j)=2$ for $j=5,6$. Furthermore, $ES_{T'}(i)=3$ for $i=1, \ldots, 4$ and $ES_{T'}(j)=2$ for $j=5,6$. Thus, $FP_{T'}(i) = ES_{T'}(i)$ for all $i \in [6]$. Note that the edge lengths of $T'$ do not follow a `molecular clock'.}
\label{fig1}
\end{figure}

Both FP and ES are linear diversity indices (in particular, $\sum_{i \in [n]} FP_T(i) = \sum_{i \in [n]} ES_T(i) = L(T,l)$). This is easy to see for FP but is less obvious for ES (it suffices to show that Eqn. (\ref{lindiv2}) holds, which is given by  Lemma~\ref{lemma_sm} later in this paper). 
In general, $FP_T(i) \neq ES_T(i)$, with Figure \ref{fig1}(a) providing a simple example. 
This raises the question of how different FP and ES can be, and under which circumstances they coincide.  Although there have been some simulation studies to compare the two indices on various trees and taxon choices  \citep{Redding2008,Redding2014}, in the first part of this paper, we determine the largest difference possible between one index and the other (both in relative terms and for absolute differences), and also considering the differences when the edge lengths are constrained to be `clock-like' or not.  In particular, rather than considering how different these indices might be `on average' or for a particular tree with particular edge lengths, we study how different they can be for rooted trees in the most extreme cases. 

\section{How different can FP and ES be?}
\label{sec_diff}
In this section, we  investigate the maximal difference (across all binary trees with $n$ leaves and all edge lengths, and all leaf choices) between the Fair Proportion index and the Equal Splits index (and vice versa), both in terms of their ratios and their absolute values.  Before proceeding, we introduce some further notation that will be helpful in the arguments that follow. Let $RB(n)$, $n \geq 2$, denote the set of all binary rooted phylogenetic trees on leaf set $[n]$.  

Notice that each pair $(T, i)$, where $T$ in $RB(n)$,  $i \in [n]$ is a leaf of $T$, gives rise to a uniquely defined directed path $e_h, \ldots, e_0$ from the root $\rho$ of $T$ to leaf $i$.  We will let $n_j$ denote the number of leaves descended from the endpoint of $e_j$ closest to the leaves.
Thus, $n_0=1$  and $n_j \geq j+1$ for all $j>0$. 
In addition, when the edge $e_j$ has an associated non-negative length $l(e_j)$, we will let $l_j$ denote this length.
We will use this notation throughout this paper. 
 In the case where $n_j = j+1$ for all $1\leq j \leq h$ and $n_h = n-1$ (i.e. when each of the pendant subtrees in Fig.~\ref{fig2} has just one leaf), then $T$ is said to be a rooted {\em caterpillar tree}, with $i$ in its {\em cherry} (a cherry is a pair of leaves adjacent to the same vertex). Note that a tree in $RB(n)$ is a caterpillar if and only if it has exactly one cherry.

\begin{figure}[htb]
\centering
\includegraphics[scale=0.8]{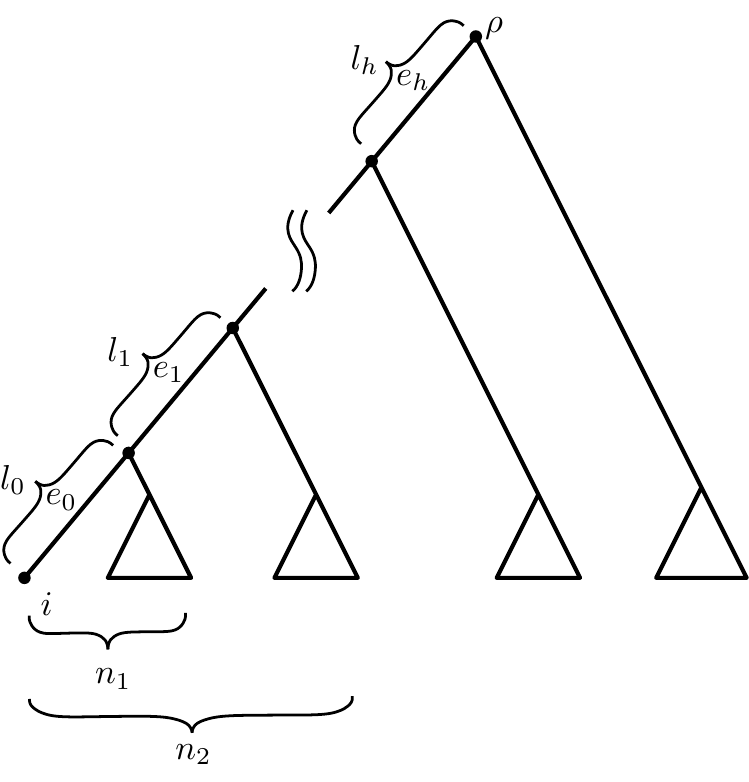}
\caption{Representing a binary tree relative to a reference leaf $i$, where $l_j$ refers to the length of edge $e_j$, and $n_j$ is the number of leaves of $T$ that are descended from the endpoint of $e_j$ that is closest to the leaves. }
\label{fig2}
\end{figure}

We will also occasionally consider a further `molecular clock' condition on the edge lengths: 
	\begin{enumerate}
	\item[] {\rm \bf (MC)} The  sum of the edge lengths from the tree root to leaf $i$ takes the same value for each leaf $i$.
	\end{enumerate}
	This condition applies, for example, if the edge lengths correspond to time, and all the leaves at the tree are sampled at the same time (e.g. at the present; cf. Figure \ref{fig1}(a)).

\subsection{Maximal ratios}

We first consider how large the FP can be relative to ES (i.e. as a ratio), as well as the ratio of ES to FP.  Let
$$\Delta_n(FP/ES) = \max\limits_{T \, \in \, RB(n)} \; \max\limits_{i \, \in \, [n]} \; \sup\limits_{l > 0}  \;  \Big\lbrace \frac{FP_T(i)}{ES_T(i)} \Big\rbrace,$$
and
$$\Delta_n(ES/FP) = \max\limits_{T \, \in \, RB(n)} \; \max\limits_{i \, \in \, [n]} \; \sup\limits_{l > 0}  \;  \Big\lbrace \frac{ES_T(i)}{FP_T(i)} \Big\rbrace,
$$
where (here and below) `sup' refers to supremum (over all assignments $l$ of edge lengths  that are positive).	

In words, $\Delta_n(FP/ES)$ measures the largest possible ratio of the FP index to the ES  index  across all binary trees with $n$ leaves, all choices of leaf $i$, and all assignments of strictly positive edge lengths.
Similarly, $\Delta_n(ES/FP)$ measures the analogous extreme value for the ratio of ES to FP. 
Throughout this paper, we impose strictly positive edge lengths (in taking the supremum), in order to avoid any ambiguity as to whether an edge in a tree with a zero length edge should be contracted (this causes a discontinuity for the ES value), and to avoid any issues associated with  fractions of the form $0/0$.

Our first theorem shows that, in the most extreme case,  the ratio of FP to ES grows exponentially with $n$, whereas the ratio of ES to FP grows only linearly with $n$.

\begin{theorem}
\label{thm1}
For $n\geq 3$:
$$\Delta_n(FP/ES) =  \frac{2^{n-2}}{n-1}  \mbox{ and } \Delta_n(ES/FP) =  \frac{n-1}{2}.$$
		Moreover, these results hold if the molecular clock condition {\bf (MC)} is imposed.
\end{theorem}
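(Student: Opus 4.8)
The plan is to work with the path representation from Figure~\ref{fig2}: fixing a tree $T \in RB(n)$ and a leaf $i$, we have a path $e_h, \ldots, e_0$ from $\rho$ to $i$ with lengths $l_j \geq 0$ and descendant-leaf counts $n_j$, where $n_0 = 1$ and $n_j \geq j+1$ for $j \geq 1$, and also $n_h \geq n-1$ (the root edge $e_h$ has all but possibly the sibling subtree's leaves below it; in fact $n_h = n$ is impossible since $i$'s path starts at the root whose two children split the leaves, so $n_h \leq n-1$, giving $n_h = n-1$ at the extreme). Then $FP_T(i) = \sum_{j=0}^h \frac{l_j}{n_j}$ and, since $T$ is binary, $ES_T(i) = \sum_{j=0}^h \frac{l_j}{2^j}$. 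Both are positive linear combinations of the $l_j$, so each ratio $FP_T(i)/ES_T(i)$ and $ES_T(i)/FP_T(i)$ is a ratio of two positive linear forms in the same variables; its supremum over $l > 0$ is the maximum over $j$ of the coefficient ratios, i.e.
\[
\sup_{l>0} \frac{FP_T(i)}{ES_T(i)} = \max_{0 \leq j \leq h} \frac{2^j}{n_j}, \qquad \sup_{l>0} \frac{ES_T(i)}{FP_T(i)} = \max_{0 \leq j \leq h} \frac{n_j}{2^j}.
\]
This reduces both extremal problems to a purely combinatorial optimization over the admissible sequences $(n_j)$.

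For $\Delta_n(FP/ES)$: we want to maximize $\max_j 2^j/n_j$ subject to $n_j \geq j+1$ and $n_j \leq n-1$ (and $n_j$ nondecreasing in $j$, with $h \leq n-1$). Since $2^j/n_j$ is largest when $j$ is large and $n_j$ is small, but $n_j$ must grow at least one per step from some point, the extreme is achieved by the caterpillar with $i$ in the cherry: there $n_j = j+1$ for $j \leq h-1$... wait — more carefully, to make $n_j$ small for large $j$ we want the path as long as possible with $n_j$ growing slowly, then suddenly the root edge has $n_h = n-1$. Taking $h = n-2$, $n_j = j+1$ for $0 \leq j \leq n-3$ and $n_{n-2} = n-1$ (the caterpillar), the relevant term is $2^{n-2}/n_{n-2} = 2^{n-2}/(n-1)$, and one checks $2^j/(j+1) \leq 2^{n-2}/(n-1)$ for all $j \leq n-3$ (this is the routine monotonicity-type inequality, using $\frac{2^{j+1}/(j+2)}{2^j/(j+1)} = \frac{2(j+1)}{j+2} \geq 1$). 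So $\Delta_n(FP/ES) = 2^{n-2}/(n-1)$, attained by a clock caterpillar (put all length on the root edge $e_{n-2}$, and assign clock-consistent lengths elsewhere — this still needs a brief check that a clock assignment attains the supremum, or rather approaches it, which is fine since we only need the value).

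For $\Delta_n(ES/FP)$: we maximize $\max_j n_j/2^j$ subject to $n_j \geq j+1$, $n_j \leq n-1$. The ratio $n_j/2^j$ with $n_j = n-1$ fixed is decreasing in $j$, so we want $n_j$ as large as possible with $j$ as small as possible; $n_j \leq n-1$ always and the smallest $j$ with $n_j$ large is constrained by $n_j \geq$ nothing from below beyond $j+1$ — but to have $n_j = n-1$ we can in principle take $j=1$ (a caterpillar where $i$'s sibling edge $e_1$ already has all other $n-1$... no: $n_1 \leq$ size of the subtree two edges up). Actually $n_1 = n-1$ forces the sibling of $i$ at depth... we need $e_1$'s lower endpoint to have $n-1$ leaves below — impossible unless $h=1$, i.e. $n=2$. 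For $n \geq 3$ the best is $e_1$ with $n_1 = n-1$ requiring the tree to be a "reversed caterpillar": $i$ and its cherry-partner sit alone, hanging off a vertex all of whose other $n-2$... this gives $n_1 = 2$. Instead take the caterpillar with $i$ NOT in the cherry but as the leaf closest to the root: then $n_1$ can be as large as $n-1$. Then $n_1/2^1 = (n-1)/2$, and one checks $n_j/2^j \leq (n-1)/2$ for all $j$ (since $n_j \leq n-1$ and $2^j \geq 2$). So $\Delta_n(ES/FP) = (n-1)/2$, again attainable under (MC).

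The main obstacle is the bookkeeping in the combinatorial optimization: correctly identifying the full set of admissible sequences $(n_j)_{j=0}^h$ — in particular the constraints $n_j \geq j+1$, monotonicity, $n_h \leq n-1$, and the realizability of any such sequence by an actual tree with $i$ in the claimed position — and then verifying the two elementary inequalities that pin down the maximizing index $j$. The reduction of a ratio of positive linear forms to a max of coefficient ratios, and the verification that the (MC) condition does not lower either extremal value (one exhibits, for each extremal tree, clock-consistent edge lengths — or a limit thereof — achieving the bound), are both straightforward once set up.
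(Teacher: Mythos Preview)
Your approach is essentially identical to the paper's: both reduce the supremum of a ratio of positive linear forms to the maximum of the coefficient ratios (the paper phrases this via Cauchy's inequality $\frac{\sum a_j}{\sum b_j} \le \max_j a_j/b_j$), and then optimize combinatorially over admissible $(n_j)$ sequences. The $FP/ES$ half is fine.

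There is, however, a real muddle in your $ES/FP$ realization. You correctly observe that $n_1 = n-1$ forces $h=1$, but your parenthetical ``i.e.\ $n=2$'' is wrong: $h=1$ simply means the path from $\rho$ to $i$ has two edges, which is perfectly possible for any $n\ge 3$. Your subsequent suggestion to take $i$ as ``the leaf closest to the root'' in a caterpillar gives $h=0$ (one pendant edge only), so there is no $e_1$ at all. The tree you want is the paper's Fig.~3(b): the root has children a single leaf $j$ and an interior vertex $v$; the children of $v$ are leaf $i$ and an arbitrary subtree on $n-2$ leaves. Then $h=1$, $n_0=1$, $n_1=n-1$, and concentrating length on $e_1$ (with clock-consistent lengths elsewhere) drives $ES_T(i)/FP_T(i)$ to $(n-1)/2$. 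Once you fix this construction, your argument is complete and matches the paper's.
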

	
\begin{proof}

Our proof makes use of the following classical inequality, due to Cauchy (for details, see \cite{stee04}, pp. 82).  
Let $a_i, b_i>0$ be constants for $i=0,1,\ldots, h$.
Then 
\begin{equation}
\label{classicineq}
\frac{\sum_{j=0}^h a_j}{\sum_{j=0}^h b_j} \leq \max \left\{\frac{a_j}{b_j}, j=0, \ldots, h \right\}.
\end{equation}

For the first ratio (FP/ES),  using the notation in Fig.~\ref{fig2}, we have:
$$\frac{FP_{T}(i)}{ES_{T}(i)} = \frac{\sum_{j=0}^h l_j/n_j}{\sum_{j=0}^h l_j/2^j},$$ and since
$n_j \geq j+1$, we have:
\begin{equation}
\label{ratio}
\frac{FP_{T}(i)}{ES_{T}(i)} \leq  \frac{\sum_{j=0}^h l_j/(j+1)}{\sum_{j=0}^h l_j/2^j} \leq \max \left\{\frac{l_j/(j+1)}{l_j/2^j}, j=0, \ldots, h \right\} = \max \left\{\frac{2^j}{j+1}, j=0, \ldots, h \right\},
\end{equation}
where the second inequality is from (\ref{classicineq}).  Now, the expression on the far right of (\ref{ratio}) is maximised  (subject to the constraint  that $j \leq h \leq n-2$) by taking $j=h=n-2$, which gives:
\begin{equation}
\label{f_p}
\frac{FP_{T}(i)}{ES_{T}(i)} \leq \frac{2^{n-2}}{n-1}.
\end{equation}
To see that this bound can be realised (in the supremum limit), consider a caterpillar tree that has leaf $i$ in its cherry and where the edges on the path from $\rho$ to $i$ have strictly positive edge lengths $\ell', \ell, \ldots, \ell$, respectively (see Fig.~\ref{fig3}(a)).  In the limit as the ratio $\ell'/\ell$ tends to infinity, 
$\frac{FP_{T}(i)}{ES_{T}(i)}$ converges to $\frac{2^{n-2}}{n-1}$ which, combined with Inequality~(\ref{f_p}), establishes the first equality in Theorem~\ref{thm1}.
Moreover,  it is clear that one can select the other edge lengths in $T$ so that the {\bf (MC)} condition  holds.

\begin{figure}[htbp]
	\centering
	\includegraphics[scale=0.8]{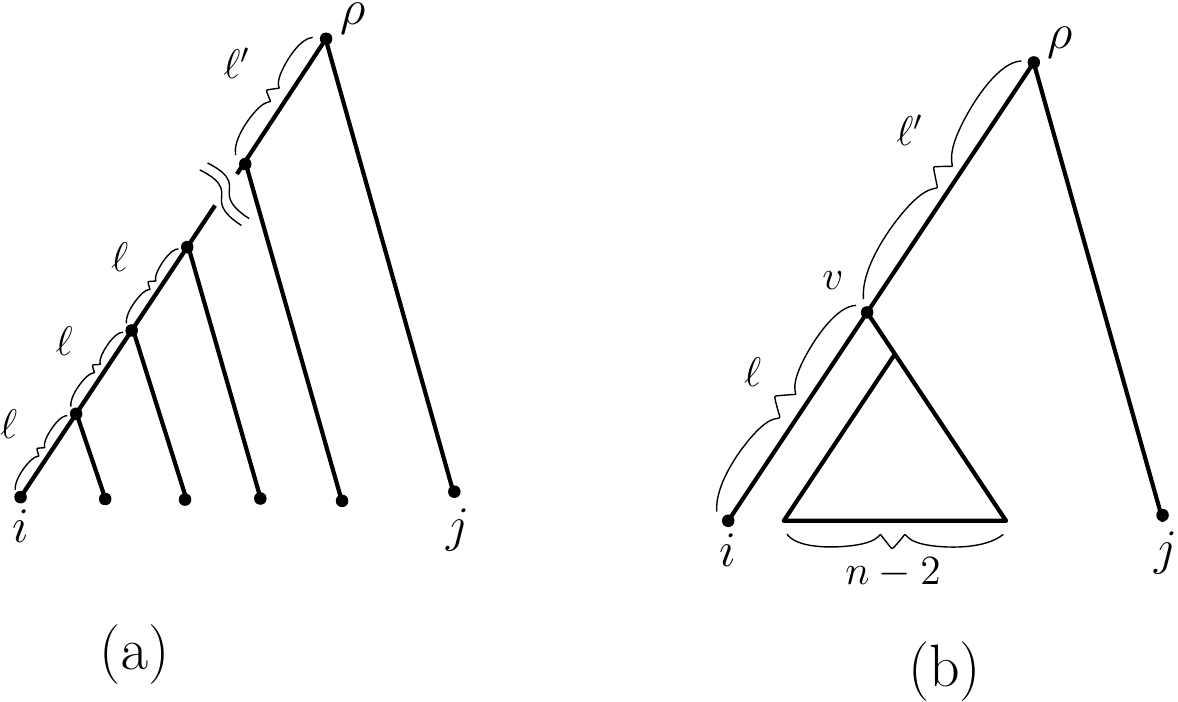}
	\caption{Trees for the proof of Theorem~\ref{thm1}.}
	\label{fig3}
\end{figure}

\bigskip

For the proof of the second equality in Theorem~\ref{thm1}, we have:
\begin{equation}
\label{ratio2}
\frac{ES_{T}(i)}{FP_{T}(i)} = \frac{\sum_{j=0}^h l_j/2^j}{\sum_{j=0}^h l_j/n_j}.
\end{equation}
By Inequality~(\ref{classicineq}), we have:
$$\frac{ES_{T}(i)}{FP_{T}(i)}  \leq \max \left\{\frac{l_j/2^j}{l_j/n_j}, j=0, \ldots, h \right\} = \max \left\{\frac{n_j}{2^j}, j=0, \ldots, h \right\}.$$

Now, $n_0=1$ and for each $j>0$ we have $n_j\leq n-(h-j)-1$.  Subject to these constraints, the ratio $\frac{n_j}{2^j}$ is maximised by
setting $n_1=n-1$ (with $h=j=1$). 
Thus  
\begin{equation}
\label{e_f}
\frac{ES_{T}(i)}{FP_{T}(i)}  \leq \frac{n-1}{2}.
\end{equation}	
To see that this bound can be realised, let $T \in RB(n)$ be such that the children of the root consist of a leaf $j$ and an interior vertex $v$, where the children of $v$ consist of leaf $i$ and a subtree of $T$ having $n-2$ leaves.
Let the edge between the root and $v$ have length $\ell'>0$ and assign length $\ell>0$ to the edge $(v, i)$  (see  Fig.~\ref{fig3}(b)). In the limit as the ratio $\ell'/\ell$ tends to infinity $ \frac{ES_T(i)}{FP_T(i)}$ converges to $\frac{n-1}{2}$ which, combined with Inequality~(\ref{e_f}), establishes the second part of Theorem~\ref{thm1}.
Again, it is clear that one can select the other edge lengths in $T$ so that the {\bf (MC)} condition  holds.

\bigskip

\end{proof} \hfill$\Box$ \\

\subsection{Maximal differences in terms of $l_{\rm max}$.}

In this section and the next, we consider the additive difference between $FP_T(i)$ and $ES_T(i)$  and vice versa for any tree $T \in RB(n)$ and any leaf $i$ of $T$. These differences can be expressed as follows:
\begin{align}
FP_T(i)-ES_T(i) &= \sum\limits_{j=1}^{h} l_j \left(  \frac{1}{n_j} - \frac{1}{2^j} \right) \mbox{ and }  \label{FP_minus_ES} \\
ES_T(i)- FP_T(i) &= \sum\limits_{j=1}^{h} l_j \left( \frac{1}{2^j} - \frac{1}{n_j} \right). \label{ES_minus_FP}
\end{align} 
Note that both sums start at $j=1$, since for $j=0$ we have $n_j=2^j=1$ and so the additional term in either sum that would correspond to $j=0$ is zero.
Also, in  contrast to the ratios considered in the last section, these differences can be arbitrarily large (e.g. multiplying all the edge lengths by a constant $C$ will increase the difference $FP_T(i)-ES_T(i)$ by $C$).  Thus we
 will analyse these maximal differences both in terms of the length of the longest edge of a tree $l_{\rm max} = \max\limits_e l(e)$ and in terms of the sum of edge lengths $L= \sum\limits_e l(e)$.

Our second theorem shows how the absolute  differences between FP and ES (and vice versa) grow either slowly (logarithmically) or are bounded independent of $n$. In particular, the absolute difference between FP-ES can be made arbitrarily large (for a fixed value of $l_{{\rm max}}$) by increasing the number of taxa; however, 
ES$-$FP cannot (it is always bounded above by $l_{{\rm max}}$ regardless of $n$). Moreover, if we impose a molecular clock, then
FP$-$ES now becomes bounded above by a constant times $l_{{\rm max}}$. The situation with absolute differences is thus quite different from that for the ratios FP/ES and ES/FP.

To state the theorem more succinctly, we introduce some additional notation.
Let $$\Delta_n(FP-ES; l_{\rm max}) = \max\limits_{T \, \in \, RB(n)} \; \max\limits_{i \, \in \, [n]} \; \sup\limits_{l > 0: \max\{l(e)\}=l_{{\rm max}}} \; \{FP_T(i) - ES_T(i) \}.$$
In words, $\Delta_n(FP-ES; l_{\rm max})$ is the largest possible difference between FP and ES across the set of 
\begin{itemize}
\item binary trees $T$ with $n$ leaves, and
\item  assignments of positive edge lengths to $T$ that have a maximal edge length $l_{\rm max}$, and 
\item choices of leaf $i$.
\end{itemize}

Similarly, let  $$\Delta_n(ES-FP; l_{\rm max}) = \max\limits_{T \, \in \, RB(n)} \; \max\limits_{i \, \in \, [n]} \;  \sup\limits_{l > 0: \max\{l(e)\}=l_{{\rm max}}} \; \{ES_T(i) - FP_T(i) \}.$$
Note that $\Delta_n(FP-ES; l_{\rm max})=\Delta_n(ES-FP; l_{\rm max}) =0$ for $n =2,3$.   In  the following theorem, we consider the case $n \geq 4$, and we let
$\gamma$ denote the Euler--Mascheroni constant ($\approx 0.5772$), and $o(1)$ denote a term that converges to 0 as $n$ grows.


\begin{theorem}
\label{thm2}
For each $n \geq 4$:
	\begin{enumerate}
		\item[{\rm (i)}]
		\begin{itemize}
		\item[{\rm (a)}] $\Delta_n(FP-ES; l_{\rm max}) =  l_{\rm max} \cdot \left( \ln n  + \gamma- 2 \right) + o(1)$.
		\item[{\rm (b)}]
		$\Delta_n(ES-FP; l_{\rm max})  < l_{\rm max},$ {\rm and }
		$$\sup\limits_{n} \Delta_n(ES-FP; l_{\rm max}) = l_{\rm max}.$$

		\end{itemize}
	\item[{\rm (ii)}]
		If {\rm \bf (MC)} holds, then $\Delta_n(FP-ES; l_{\rm max}) < l_{\rm max} \cdot \frac{2}{\ln 2}$.

	\end{enumerate}
\end{theorem}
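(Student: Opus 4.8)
The plan is to analyze the expression for $FP_T(i) - ES_T(i)$ in Eqn.~(\ref{FP_minus_ES}) under the molecular clock assumption, and bound it by $l_{\rm max} \cdot \frac{2}{\ln 2}$ uniformly in $n$. The key structural consequence of {\bf (MC)} is that the edge lengths along the path $e_h, \ldots, e_0$ cannot all be large simultaneously: since every leaf is equidistant from the root, the lengths along the $\rho$-to-$i$ path must ``fit'' inside the heights of the pendant subtrees hanging off that path. Concretely, if $L_j = \sum_{k=0}^{j} l_k$ denotes the distance from the bottom endpoint of $e_j$ down to leaf $i$, then the pendant subtree attached at the top endpoint of $e_j$ must have height at least $L_{j-1}$ (its own leaves are at distance $L_j$ from the relevant vertex), which forces that subtree to contain at least a certain number of leaves — roughly, a subtree of height $\ge L_{j-1}$ built from edges of length $\le l_{\rm max}$ needs depth $\ge L_{j-1}/l_{\rm max}$ and hence $n_j \ge n_{j-1} + 2^{\lceil L_{j-1}/l_{\rm max}\rceil}$ or a similar exponential lower bound. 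The heart of the argument is to convert this ``deep edges force many descendants'' principle into a bound on $\sum_j l_j/n_j$.

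The key steps, in order, are as follows. First I would normalize by setting $l_{\rm max} = 1$ (the whole quantity scales linearly in $l_{\rm max}$), and drop the negative $-l_j/2^j$ terms, so it suffices to bound $\sum_{j=1}^{h} l_j/n_j$. Second, I would establish the clock-induced inequality: for each $j \ge 1$, because the subtree pendant to the path at the top of $e_j$ has height $\ge L_{j-1} = \sum_{k=0}^{j-1} l_k$ and all edges have length $\le 1$, that subtree has at least $2^{\lceil L_{j-1} \rceil}$ leaves (a binary tree of height $< m$ has at most $2^m - 1$ leaves), and therefore $n_j \ge 2^{\lceil L_{j-1}\rceil} \ge 2^{L_{j-1}}$. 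Third, I would plug this in to get $\sum_{j=1}^{h} \frac{l_j}{n_j} \le \sum_{j=1}^{h} l_j\, 2^{-L_{j-1}} = \sum_{j=1}^h l_j\, 2^{-(L_j - l_j)}$. Fourth, I would bound this sum by the integral $\int_0^\infty 2^{-t}\,dt$: writing $s_j = L_j$ for the partial sums (with $s_0 = l_0 \le 1$, increments $l_j \le 1$), each term $l_j 2^{-(s_j - l_j)} = l_j 2^{l_j} 2^{-s_j} \le 2 \int_{s_{j-1}}^{s_j} 2^{-t}\,dt \cdot (\text{correction})$; more carefully, since $2^{-(s_j - l_j)} = 2^{-s_{j-1}}$ and $l_j 2^{-s_{j-1}} \le \int_{s_{j-1}}^{s_j} 2^{-t} \cdot 2^{l_j}\,dt \le 2\int_{s_{j-1} }^{s_j} 2^{-t}\,dt$ using $l_j \le 1$, telescoping gives $\sum_{j\ge 1} l_j 2^{-s_{j-1}} \le 2 \int_0^\infty 2^{-t}\, dt = \frac{2}{\ln 2}$. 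Finally, I would check the inequality is strict (the negative terms $-l_j/2^j$ are strictly negative for $j\ge 1$, or the integral comparison is not tight), giving $\Delta_n(FP-ES; l_{\rm max}) < l_{\rm max}\cdot \frac{2}{\ln 2}$.

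The main obstacle I anticipate is getting the constant exactly right in the integral-comparison step: the naive bound $l_j 2^{l_j} \le 2$ for $l_j \in (0,1]$ combined with the Riemann-sum estimate could a priori produce a constant larger than $\frac{2}{\ln 2} \approx 2.885$, so some care is needed — one wants $\sum_{j\ge 1} l_j 2^{-s_{j-1}}$, and since the summand as a function of the ``position'' $s_{j-1}$ with step $l_j$ is a left-endpoint Riemann sum for the \emph{decreasing} function $2^{-t}$, we actually get $\sum_{j\ge 1} l_j 2^{-s_{j-1}} \le l_1 2^{-s_0} + \int_{s_1}^{\infty} 2^{-t}\,dt \le 1 + \frac{1}{\ln 2}\cdot 2^{-s_1}$, and bounding $s_1 \ge l_1$ crudely still leaves a gap. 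The clean route is: since $2^{-t}$ is decreasing, $l_j 2^{-s_{j-1}} = \int_{s_{j-1}}^{s_j} 2^{-s_{j-1}}\, dt \le \int_{s_{j-1}}^{s_j} 2^{-(t - l_j)}\,dt \le 2^{l_j}\int_{s_{j-1}}^{s_j} 2^{-t}\, dt$; but to avoid the stray $2^{l_j} \le 2$ factor one instead notes $n_j \ge 2^{\lceil L_{j-1} \rceil} \ge 2^{L_{j-1}}$ can be sharpened using $n_j \ge 1 + 2^{\lceil L_{j-1}\rceil}$ or one simply uses the crude $2^{l_j} \le 2$ and the fact that the first term $l_1 2^{-s_0}$ with $s_0 \ge $ something small must be handled separately — I expect the cleanest presentation bounds $\sum l_j/n_j \le \sum l_j 2^{-L_{j-1}}$ and then compares directly to $\int_0^\infty 2^{-t}\,dt = 1/\ln 2$ times a factor of $2$ coming from $2^{l_j}\le 2$, yielding exactly $2/\ln 2$, with strictness inherited from dropping the $-\sum l_j 2^{-j}$ terms.
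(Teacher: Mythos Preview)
Your proposal addresses only Part~(ii); Parts~(i--a) and~(i--b) are not attempted. For Part~(ii) your overall strategy matches the paper's: both establish the clock-induced lower bound $n_j \ge 2^{\lceil L_{j-1}/l_{\rm max}\rceil}$ (where $L_{j-1}=\sum_{k=0}^{j-1}l_k$) and then bound the resulting sum $\sum_j l_j\, 2^{-L_{j-1}/l_{\rm max}}$ by $2/\ln 2$.

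There is one genuine gap in your justification of the $n_j$ bound. You write that the relevant subtree ``has height $\ge L_{j-1}$'' and invoke ``a binary tree of height $< m$ has at most $2^m-1$ leaves''. This is the wrong implication: height $\ge m$ alone does not force many leaves (a caterpillar has arbitrarily large height but only linearly many leaves). What you need --- and what the paper proves --- is that under {\bf (MC)} \emph{every} leaf below the lower endpoint $v_j$ of $e_j$ lies at distance exactly $L_{j-1}$ from $v_j$, hence (since each edge has length $\le l_{\rm max}$) at edge-depth at least $m=\lceil L_{j-1}/l_{\rm max}\rceil$; a binary tree in which every leaf has depth $\ge m$ has $\ge 2^m$ leaves. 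You do seem to know this from your parenthetical ``its own leaves are at distance $L_j$ from the relevant vertex'', but the formal justification you give is backwards and should be corrected (there is also some drift in which endpoint of $e_j$ you mean).

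Your analytic step is cleaner than the paper's. The paper packages the inequality $\sum_{i=1}^h x_i\, 2^{-\sum_{j<i}x_j}\le \frac{2}{\ln 2}\cdot 2^{-x_0}$ (for $x_i\in[0,1]$) as a separate lemma, proved via the calculus identity $\int_1^h f(x)\exp\bigl(-c\int_0^x f\bigr)\,dx=\tfrac{1}{c}(G(1)-G(h))$ together with an $L^2$-approximation of the step function $x\mapsto x_{\lfloor x\rfloor}$ by continuous functions. Your direct Riemann-sum comparison --- writing $l_j\, 2^{-s_{j-1}} = \int_{s_{j-1}}^{s_j}2^{-s_{j-1}}\,dt \le 2\int_{s_{j-1}}^{s_j}2^{-t}\,dt$ (using $t-s_{j-1}\le l_j\le 1$) and telescoping to $2\int_{s_0}^{s_h}2^{-t}\,dt<\tfrac{2}{\ln 2}$ --- is more elementary and yields the same constant (indeed the same refinement $\tfrac{2}{\ln 2}\,2^{-l_0}$ that the paper's lemma states). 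This is a genuine simplification of the paper's route; the strictness you want follows already from $s_h<\infty$, so dropping the $-l_j/2^j$ terms is harmless.
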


\noindent {\em Proof of Part (i--a):} We first show that a triple $(T, i, l)$ that realizes the quantity $\Delta_n(FP-ES; l_{\rm max})$ is a rooted caterpillar tree on $n$ leaves with $i$ being a leaf of the cherry in $T$, and each edge
on the path from the root of $T$ to $i$ having length $l_{\rm max}$. This is illustrated in Fig.~\ref{fig4}(a).
Let $e_j, l_j$ and $n_j$ be as described in Fig.~\ref{fig2}.
Let  $\delta_j = l_j \left( \frac{1}{n_j} - \frac{1}{2^j} \right)$ denote the contribution of edge $e_j$ to $FP_{T}(i)-ES_{T}(i)$ (cf. Eqn. \eqref{FP_minus_ES}). Using only the fact that  $T \in RB(n)$  it follows that $n_j \geq j+1$ for each $j \geq 0$  and so $\delta_j \leq l_j \left( \frac{1}{j+1} - \frac{1}{2^j} \right)$. In particular, $\argmax\limits_{n_j} \; \left\lbrace l_j \left( \frac{1}{n_j} - \frac{1}{2^j} \right) \right\rbrace \, = \left\lbrace \frac{1}{j+1} \right\rbrace$, and so  $\delta_j$ is maximal if and only if $n_j = j+1$.
As this holds for all values of $j$, this immediately implies that the maximal pending subtree of $T$ containing leaf $i$ (call it $t_1$) has to be a caterpillar tree on $n' \leq n-1$ leaves and with $i$ being a leaf of the cherry of this caterpillar.
We show that $n'=n-1$ (and thus $T$ is a caterpillar) by deriving a contradiction. Suppose that $n' < n-1$.  In that case, the two subtrees of $T$ incident with the root of $T$ consist of $t_1$ and another subtree (call it $t_2$) that has two or more leaves.
 In particular, this implies that $h < n-2$ (i.e. there are less than $n-1$ edges on the path from $i$ to the root of $T$). However, as $\frac{1}{j+1} - \frac{1}{2^j} \geq 0$ for each $j$, this would imply that $T$ is not a tree that maximises  $\max\limits_{i' \, \in \, [n]} \; \sup\limits_{l > 0} \; \{FP_T(i') - ES_T(i') \}$, since $FP_{T}(i)-ES_{T}(i)$ could be increased by sequentially attaching all but one leaf from $t_2$ to the edge connecting $t_1$ and the root (i.e. by extending the length of the path from leaf $i$ to the root of $T$). Thus, $n'=n-1$, and therefore $T$ has to be the caterpillar tree on $n$ leaves  that has $i$ in its cherry.
Moreover,  by again invoking the inequality $\frac{1}{j+1} - \frac{1}{2^j} \geq 0$ (for all $j \geq 0$) and recalling that $\delta_j = l_j \left (\frac{1}{j+1} - \frac{1}{2^j} \right)$, we can also conclude that $l_j = l_{\rm max}$ for all $j$ (as otherwise $\delta_j$ and thus, $FP_{T}(i)-ES_{T}(i)$ could be increased). In summary, $(T, i, l)$ has the structure claimed.
\begin{figure}[htb]
	\centering
	\includegraphics[scale=0.5]{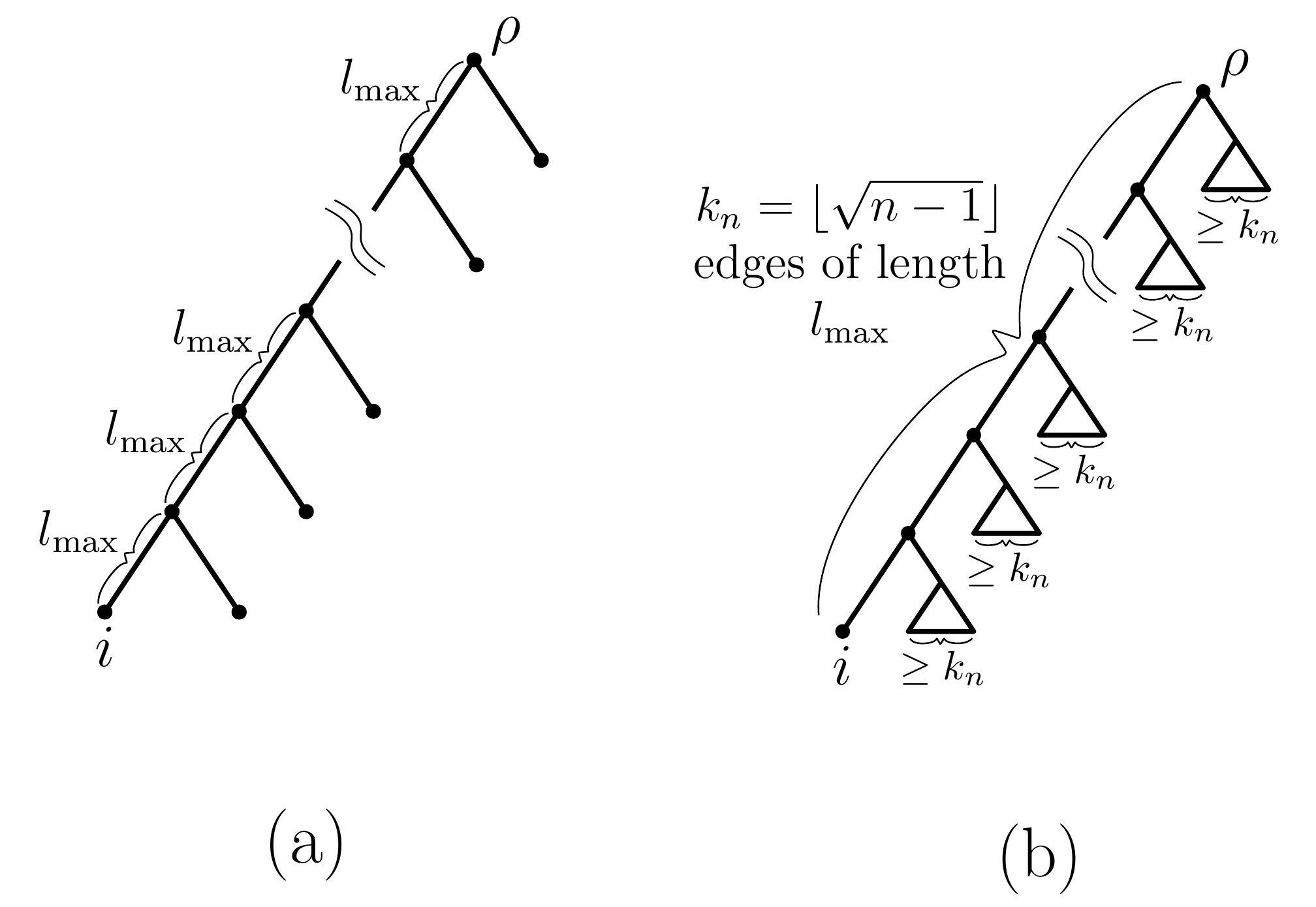}
	\caption{Trees for the proof of Theorem~\ref{thm2}.}
	\label{fig4}
\end{figure}
It is now straightforward to calculate  $\Delta_n(FP-ES; l_{\rm max})$  for the optimal choice of $(T, i, l)$ described above. We have:
\begin{align*}
FP_T(i) &= l_{\rm max} \cdot \sum\limits_{j=0}^{n-2} \frac{1}{j+1} =  l_{\rm max} \cdot  \left( \ln n + \gamma + o(1) \right),\\
{\rm and}\\ 
ES_T(i) &= l_{\rm max} \cdot  \sum\limits_{j=0}^{n-2} \frac{1}{2^j} = 2 \, l_{\rm max} + o(1).
\end{align*}
Consequently,
$$\Delta_n(FP-ES;  l_{\rm max}) =l_{\rm max} \cdot \left( \ln n + \gamma - 2 \right) + o(1),$$
which completes the proof of Part (i--a).

\bigskip

{\em Proof of Part (i--b):} From Eqn. \eqref{ES_minus_FP}, we have:
\begin{equation}
\label{est}
ES_T(i) - FP_T(i) = \sum\limits_{j=1}^{h} l_j \left( \frac{1}{2^j} - \frac{1}{n_j} \right) < \, \sum\limits_{j=1}^{h} l_j \frac{1}{2^j} \, \leq \, l_{\rm max} \cdot \sum\limits_{j=1}^{h} \frac{1}{2^j} \, < l_{\rm max}.
\end{equation}
Thus, $\Delta_n(ES-FP; l_{\rm max})  < l_{\rm max}.$  To show that
$\sup_n \Delta_n(ES-FP; l_{\rm max})  = l_{\rm max},$  let $T_n$ be a tree in which the path $P$ from the root to leaf $i$ has
$k_n = \lfloor \sqrt{n-1}\rfloor$ edges, and each of the subtrees incident with the vertices of $P$ (except the final leaf vertex) 
has at least $k_n$ leaves.  Assign edge length $l_{\rm max}$ to each of the edges in $P$.   This is illustrated in Fig.~\ref{fig4}(b). 
Then 
$$ES_{T_n}(i) - FP_{T_n}(i) =l_{\rm max} \cdot \sum\limits_{j=1}^{k_n}  \left( \frac{1}{2^j} - \frac{1}{n_j} \right).$$
Now, $\lim_{n \rightarrow \infty} \sum\limits_{j=1}^{k_n}  \frac{1}{2^j}  = 1$
and since $n_j \geq j \cdot k_n$, we have: 
$$\sum\limits_{j=1}^{k_n}   \frac{1}{n_j}  \leq \frac{1}{k_n}\cdot \sum\limits_{j=1}^{k_n}  \frac{1}{j} \sim 
\frac{\ln(k_n)}{k_n} \rightarrow 0,$$
as $n \rightarrow \infty$.
Combining this with Eqn.~(\ref{est}) gives:
$\lim_{n\rightarrow \infty} ES_{T_n}(i) - FP_{T_n}(i) = l_{\rm max}$, as required.

\bigskip

{\em Proof of Part (ii):}  Let $T \in RB(n)$ and $i\in [n]$. From Eqn. \eqref{FP_minus_ES}, we have $FP_T(i)-ES_T(i) = \sum\limits_{j=1}^h l_j \left( \frac{1}{n_j} - \frac{1}{2^j} \right).$
We claim that, under condition {\rm \bf (MC)}, 
\begin{equation}
\label{d1}
n_j \geq 2^{\lceil \sum_{k=0}^{j-1}l_k / l_{\rm max} \rceil}.
\end{equation}
To establish Inequality (\ref{d1}), the {\rm \bf (MC)} condition implies that for each leaf $i'$ of $T$ descended from the endpoint $v_j$ of $e_j$ closest to the leaves, the sum of the edge lengths from $v_j$ to leaf $i'$ is
equal to $\sum_{k=0}^{j-1}l_k$.  Moreover, each of these edges has length at most $l_{\rm max}$, which means that the number of edges on this path must be at least $m \coloneqq \lceil \sum_{k=0}^{j-1}l_k / l_{\rm max} \rceil$. 
Now, for $r \geq 1$, let $N_r$ be the number of vertices descended from $v_j$ that are separated from $v_j$ by exactly $r$ edges. We then have $N_r = 2^r$ for all $r=1, \ldots, m$. This follows from an inductive argument. Clearly, $N_1=2$ (as $T$ is binary and $v_j$ is not a leaf since $j > 0$). Suppose the statement is true for $1 \leq r < m$ and consider $N_{r+1}$. Each vertex counted by $N_r$ must have two children (otherwise there would be a leaf that is separated from $v_j$ by less than $m$ edges) and thus $N_{r+1} = 2 N_r  = 2 \cdot 2^r = 2^{r+1}$, which completes the inductive step. Now, as all leaves descended from $v_j$ are separated by at least $m$ edges from $v_j$, we have $n_j \geq N_m \geq 2^m = 2^{\lceil \sum_{k=0}^{j-1}l_k / l_{\rm max} \rceil}$, which completes the proof.\\

Thus, from Eqn.~(\ref{d1}) and Eqn.~(\ref{FP_minus_ES}), we have:
\begin{equation}
\label{d3}
FP_T(i) - ES_T(i)  \leq \sum_{j=1}^h l_j (2^{- \lceil \sum_{k=0}^{j-1}l_k / l_{\rm max} \rceil} -2^{-j}).
\end{equation}

To complete the proof  of Part (ii), we require the following lemma, the proof of which is provided in the Appendix.

\begin{lemma} \label{Lemma_Convergence}
Suppose that $x_0, x_1, x_2, \ldots, x_h$ all lie in the interval $[0,1]$. Then
$$\sum_{i=1}^h x_i 2^{-\sum_{j<i} x_j} \leq  \frac{2}{\ln 2}\cdot 2^{-x_0}.$$
\end{lemma}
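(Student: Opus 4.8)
The plan is to reduce the discrete inequality to a continuous one by recognizing the sum $\sum_{i=1}^h x_i 2^{-\sum_{j<i}x_j}$ as a lower Riemann-type sum for an integral of $2^{-t}$. Write $S_i = \sum_{j=0}^{i} x_j$ for the partial sums, so $S_{-1}:=0$ is not needed but $S_0 = x_0$, and $\sum_{j<i}x_j = S_{i-1}$. The key observation is that, since $t\mapsto 2^{-t}$ is decreasing, for each $i\ge 1$ we have
$$x_i\, 2^{-S_{i-1}} = 2^{-S_{i-1}}(S_i - S_{i-1}) \le \int_{S_{i-1}}^{S_i} 2^{-t}\,dt \cdot \frac{2^{-S_{i-1}}}{2^{-S_{i-1}}}$$
— more carefully, because $2^{-t}\ge 2^{-S_i}$ on $[S_{i-1},S_i]$ one only gets a bound the wrong way, so instead I will use that $2^{-S_{i-1}} = \frac{2^{-S_{i-1}}}{\ln 2}\cdot\ln 2$ and compare $x_i 2^{-S_{i-1}}$ to $\frac{1}{\ln 2}\big(2^{-S_{i-1}} - 2^{-S_i}\big)$. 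Indeed, from the elementary inequality $1 - 2^{-x} \ge x\ln 2 \cdot 2^{-x}$ for $x\in[0,1]$ (equivalently $e^{u}-1\ge u$ applied at $u = x\ln 2$, after multiplying by $2^{-x}$), we get $2^{-S_{i-1}}-2^{-S_i} = 2^{-S_{i-1}}(1-2^{-x_i}) \ge \ln 2\cdot x_i\, 2^{-S_{i-1}}\cdot 2^{-x_i}$. That has an extra factor $2^{-x_i}\ge \tfrac12$, so it yields $x_i 2^{-S_{i-1}} \le \frac{2}{\ln 2}(2^{-S_{i-1}} - 2^{-S_i})$.

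Summing this telescoping bound over $i = 1, \ldots, h$ gives
$$\sum_{i=1}^h x_i\, 2^{-S_{i-1}} \le \frac{2}{\ln 2}\big(2^{-S_0} - 2^{-S_h}\big) \le \frac{2}{\ln 2}\, 2^{-S_0} = \frac{2}{\ln 2}\, 2^{-x_0},$$
which is exactly the claimed inequality. So the whole argument rests on the single scalar inequality $x_i\, 2^{-S_{i-1}} \le \frac{2}{\ln 2}\big(2^{-S_{i-1}} - 2^{-S_i}\big)$ for each term, proved via $1 - 2^{-x}\ge \tfrac{\ln 2}{2}\, x$ on $[0,1]$ — which is immediate since $x\mapsto 1-2^{-x}$ is concave, equals $0$ at $x=0$ and $\tfrac12$ at $x=1$, hence lies above the chord $y = \tfrac{x}{2}$, and $\tfrac12 \ge \tfrac{\ln 2}{2}$ wait — I should instead compare directly: the chord from $(0,0)$ to $(1,\tfrac12)$ is $y=\tfrac x2$, and $\tfrac{x}{2}\ge \tfrac{\ln 2}{2}x$ fails since $\ln 2 <1$. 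Let me recheck: I want $1-2^{-x}\ge \tfrac{\ln 2}{2}x$; by concavity $1-2^{-x}\ge \tfrac x2$ on $[0,1]$, and $\tfrac x2 \ge \tfrac{\ln2}{2}x$ since $1\ge \ln 2$. Good, so the bound holds with room to spare.

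The main obstacle — such as it is — is getting the constant right: a naive Riemann-sum comparison of $\sum x_i 2^{-S_{i-1}}$ to $\int_0^\infty 2^{-t}dt = \tfrac{1}{\ln 2}$ gives only $\tfrac{1}{\ln 2}$ without the $2^{-x_0}$ localization and with the inequality pointing the wrong way, because the sample point $S_{i-1}$ sits at the left (larger) end of each sub-interval. The factor-of-$2$ loss is exactly what absorbs this, and it enters through the $2^{-x_i}\ge\tfrac12$ step (worst case $x_i=1$); one should note the bound is therefore not tight in general but is the clean uniform constant needed for Theorem~\ref{thm2}(ii). I would present the one-line term inequality first (with the short concavity justification of $1-2^{-x}\ge x/2$), then the telescoping sum, then drop the nonnegative $2^{-S_h}$ term.
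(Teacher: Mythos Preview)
Your argument is correct and takes a genuinely different, more elementary route than the paper. The paper establishes the lemma via a continuous detour: it proves an integral identity (essentially computing $\int_1^h f(x)\exp(-c\int_0^x f)\,dx$ by the fundamental theorem of calculus), exploits the bound $\int_0^{x-1}f \ge \int_0^x f - 1$ coming from $f\le 1$, and then passes back to the discrete statement by approximating a step function by continuous functions in $L^2$. Your proof avoids all of this by observing directly that on $[0,1]$ one has $1-2^{-x}\ge x/2 \ge (\ln 2)/2\cdot x$ (the first inequality by concavity of $1-2^{-x}$ and its endpoint values $0$ and $1/2$), which yields the term-wise bound $x_i\,2^{-S_{i-1}} \le \tfrac{2}{\ln 2}\bigl(2^{-S_{i-1}}-2^{-S_i}\bigr)$ and hence a telescoping sum. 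This is shorter, fully discrete, and makes transparent where the constant $2/\ln 2$ comes from (the factor $2$ is the worst-case loss $2^{-x_i}\ge 1/2$ at $x_i=1$, exactly as you note). For a clean write-up I would drop the exploratory false starts and lead with the scalar inequality $1-2^{-x}\ge x/2$ on $[0,1]$, then the telescoping step.
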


We apply this lemma by setting $x_i = l_i/l_{\rm max}$ for $i =0,1, \ldots, h$.
By Inequality (\ref{d3}), we have:
$$FP_T(i) - ES_T(i)  <   l_{\rm max} \sum_{j=1}^h x_j 2^{- \sum_{k=0}^{j-1}x_k} \leq l_{\rm max} \cdot \frac{2}{\ln 2},$$
as required, where the last inequality is from  Lemma~\ref{Lemma_Convergence}.
 \hfill$\Box$ \\

\subsection{Maximal differences in terms of $L$}

We now describe the maximal possible (positive and negative) difference between FP and ES in terms of the total length of the tree ($L=\sum\limits_e l(e)$), rather than in terms of $l_{\rm max}$ (this is summarized in Theorem~\ref{thm3} below).  Let 
$$\Delta_n(FP-ES; L) = \max\limits_{T \, \in \, RB(n)} \; \max\limits_{i \, \in \, [n]}  \sup\limits_{l > 0: \sum\limits_e l(e)=L}  \; \{FP_T(i) - ES_T(i) \}.$$
In words, $\Delta_n(FP-ES; L)$ is the largest possible difference between FP and ES across the set of:
\begin{itemize}
\item binary trees $T$ with $n$ leaves, and
\item assignments of positive edge lengths to $T$ for which the total sum of the edge lengths is $L$, and
\item choices of leaf $i$.
\end{itemize}
Similarly, let 
$$\Delta_n(ES-FP; L) = \max\limits_{T \, \in \, RB(n)} \; \max\limits_{i \, \in \, [n]}  \sup\limits_{l > 0: \sum\limits_e l(e)=L}  \; \{ES_T(i) - FP_T(i) \}.$$
\begin{theorem}
\label{thm3}
\mbox{}
\begin{itemize}
\item[{\rm (i)}]
$$\Delta_n(FP-ES; L)  = \lambda_n L,$$
where $$\lambda_n= \begin{cases}
0, & \mbox{ for $n=2, 3$}\\
\frac{1}{12}, & \mbox{ for $n=4$} \\
\frac{1}{8}, & \mbox{ for $n=5$}\\
\frac{11}{80}, & \mbox{ for $n \geq 6$},
\end{cases}
$$
and for $n \geq 3$:
$$\Delta_n(ES-FP; L)  = \left(\frac{1}{2}- \frac{1}{n-1}\right)L.$$
\item[{\rm (ii)}]
If the molecular  clock {\bf (MC)} condition is imposed then the above expressions for $\Delta_n(FP-ES; L)$
and $\Delta_n(ES-FP; L)$  remain true if $L$ is replaced by $L/2$.
\end{itemize}
\end{theorem}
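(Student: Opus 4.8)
The overall strategy is to reduce each maximization to an optimization over the path from the root to leaf $i$, exactly as in the proofs of Theorems~\ref{thm1} and~\ref{thm2}. Recall from Eqns.~\eqref{FP_minus_ES} and~\eqref{ES_minus_FP} that both differences depend only on the quantities $l_1,\ldots,l_h$ and $n_1,\ldots,n_h$ associated with the $(T,i)$ path. Since each coefficient $\frac{1}{n_j}-\frac{1}{2^j}$ is nonpositive for $n_j\ge 2^j$ and nonnegative for $n_j\le 2^j$ (and in particular nonnegative whenever $n_j=j+1$), I would first argue---as in the proof of Theorem~\ref{thm2}(i--a)---that for $\Delta_n(FP-ES;L)$ the optimal tree puts as many leaves as possible ``off to the side early'' so that $n_j$ is as small as allowed, i.e. $n_j=j+1$, making $T$ a caterpillar with $i$ in its cherry; symmetrically, for $\Delta_n(ES-FP;L)$ the optimal tree makes $n_j$ as large as possible for small $j$, forcing the same ``rake'' shape used for the second equality of Theorem~\ref{thm1}. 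Once the tree shape is fixed, the problem becomes: maximize $\sum_j c_j l_j$ subject to $l_j\ge 0$ and $\sum_j l_j \le L$ (one may drop the edges off the path since they only consume budget without contributing), where $c_j=\frac{1}{j+1}-\frac{1}{2^j}$ (resp. $c_j=\frac{1}{2^j}-\frac{1}{j+1}$, with $n_1=n-1$ in the ES$-$FP case the relevant coefficient is $\frac12-\frac1{n-1}$). This is a trivial linear program: the maximum is $L\cdot\max_j c_j$, attained by concentrating all the length on the single edge achieving the maximum.

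For $\Delta_n(ES-FP;L)$ this immediately gives the stated value: in the rake tree the only nonzero-coefficient edge worth loading is $e_1$ with $n_1=n-1$, giving coefficient $\frac12-\frac1{n-1}$, and one checks this beats any caterpillar-type configuration, yielding $\left(\frac12-\frac1{n-1}\right)L$ for $n\ge 3$. For $\Delta_n(FP-ES;L)$ the answer is $L\cdot\max_{1\le j\le n-2}\left(\frac{1}{j+1}-\frac{1}{2^j}\right)$, and the piecewise formula for $\lambda_n$ is then just the explicit evaluation of this maximum: $c_1=\frac12-\frac12=0$, $c_2=\frac13-\frac14=\frac1{12}$, $c_3=\frac14-\frac18=\frac18$, $c_4=\frac15-\frac1{16}=\frac{11}{80}$, $c_5=\frac16-\frac1{32}=\frac{13}{96}<\frac{11}{80}$, and for $j\ge4$ the sequence $c_j$ is decreasing (since $\frac1{j+1}$ decreases while $\frac1{2^j}$ decreases faster only up to a point---one checks $c_4>c_5>c_6>\cdots$ directly, using that $\frac1{j+1}-\frac1{j+2}<\frac1{2^{j+1}}$ for $j\ge4$). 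Hence the running maximum is $0$ for $n\le3$ (only $j=1$ available), $\frac1{12}$ at $n=4$, $\frac18$ at $n=5$, and $\frac{11}{80}$ for all $n\ge6$, matching $\lambda_n$.

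The one genuine subtlety---and the step I expect to be the main obstacle---is justifying rigorously that the extremal tree really is the caterpillar (resp. rake), rather than some mixed shape, given that we are now optimizing over edge lengths with a \emph{total-length} constraint rather than a per-edge constraint. The argument is: for any $(T,i,l)$, replacing $T$ by the caterpillar on $n$ leaves with $i$ in the cherry and moving all the edge-length budget onto the edge $e_{j^\ast}$ with $j^\ast=\arg\max_j c_j$ can only increase $FP_T(i)-ES_T(i)$, because every coefficient $\frac1{n_j}-\frac1{2^j}$ in the original tree is at most $c_j\le c_{j^\ast}$ and the caterpillar makes $e_{j^\ast}$ available (it has $n-2\ge j^\ast$ path-edges when $n\ge6$, and for $n=4,5$ one uses $j^\ast=n-2$ which is always available on a caterpillar). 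A symmetric monotonicity argument handles ES$-$FP. Since we take a supremum over strictly positive edge lengths, the optimum is approached in the limit (put $\varepsilon$ on every other edge and $L-(\#\text{edges}-1)\varepsilon$ on $e_{j^\ast}$, let $\varepsilon\to0^+$); because the differences are continuous in $l$ and the feasible region's closure is compact, the supremum equals the claimed maximum.

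\medskip

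\noindent\emph{Proof of Part (ii).} Under {\bf (MC)}, the path from the root to leaf $i$ has total length equal to the common root-to-leaf distance, call it $D$; and since every root-to-leaf path has length $D$ while $L=\sum_e l(e)$, a counting argument (each of the $n$ leaves contributes a length-$D$ path, but interior edges are shared) shows the path length $\sum_{j=0}^h l_j$ can be as large as---but no larger than, up to the $o$-considerations of passing to the supremum---$L/2$: indeed the path from $i$ to the root uses at most half the total length because the ``sibling side'' at the root-adjacent edge must carry a matching length $D$ down to its own leaves. More carefully, under {\bf (MC)} the caterpillar (resp. rake) extremal shape can be realized with the $i$-path carrying length arbitrarily close to $L/2$ and the remaining length $L/2$ distributed on the complementary subtree to satisfy {\bf (MC)}; and conversely $\sum_{j=0}^h l_j\le L/2$ always holds under {\bf (MC)} for the extremal shapes, since the subtree hanging off $e_h$'s far endpoint (or off the root in the rake case) must independently span length $D=\sum_{j=0}^{h-1}l_j$ to reach its leaves. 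Feeding the bound $\sum_j l_j\le L/2$ into the linear-program computation above replaces $L$ by $L/2$ throughout, giving the stated expressions; and the extremal constructions with the $i$-path at length exactly $L/2$ (in the supremum limit) show these bounds are tight. The continuity/compactness remark from Part~(i) applies verbatim. \hfill$\Box$
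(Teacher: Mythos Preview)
Your approach to Part~(i) is essentially the paper's: bound each coefficient $\tfrac{1}{n_j}-\tfrac{1}{2^j}$ by $c_j=\tfrac{1}{j+1}-\tfrac{1}{2^j}$, use $\sum_j l_j\le L$, and then observe that the linear maximum $L\cdot\max_j c_j$ yields the stated $\lambda_n$; the caterpillar (for FP$-$ES) and the ``rake'' (for ES$-$FP) realize the bounds in the supremum limit. One small slip: to show $c_j$ is decreasing for $j\ge 4$ you need $\tfrac{1}{j+1}-\tfrac{1}{j+2}>\tfrac{1}{2^{j+1}}$, not ``$<$'' (equivalently $(j+1)(j+2)<2^{j+1}$, which holds for all $j\ge 4$).

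There is, however, a genuine gap in your Part~(ii) lower bound for FP$-$ES when $n>6$. You assert that under {\bf (MC)} the caterpillar ``can be realized with the $i$-path carrying length arbitrarily close to $L/2$''. Having the $i$-path of length $\approx L/2$ is not enough: you need the single edge $e_4$ to carry length $\approx L/2$ while the other path edges $l_5,\ldots,l_{n-2}$ are negligible. But in a caterpillar under {\bf (MC)}, each pendant leaf attached above $e_4$ (i.e.\ at a spine vertex between the root and $e_4$) must then have a pendant edge of length $\approx D\approx L/2$ to reach the common root-to-leaf distance, and there are $h-4=n-6$ such leaves. A short computation shows the total edge length is forced to be approximately $(h-2)\cdot L/2$, which exceeds $L$ as soon as $n>6$. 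So the caterpillar cannot approach the bound under {\bf (MC)} for $n>6$.

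The paper's fix is precisely to avoid this: for $n>6$ it does \emph{not} use a caterpillar but instead takes the $n=6$ caterpillar and replaces the single leaf adjacent to the root by an arbitrary binary subtree on $n-5$ leaves. This keeps the path from $\rho$ to $i$ at length $h=4$, so that $e_4$ is the root-adjacent edge; both root-adjacent edges can then be given length $\approx L/2$ and {\bf (MC)} can be satisfied with the remaining $O(\epsilon)$ spread across the two subtrees. Your ES$-$FP construction under {\bf (MC)} (the rake) is fine, since there $h=1$ already and the analogous obstruction does not arise.
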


\begin{proof}
\mbox{}
For Part (i),  we first  show that for any given tree $T 
\in RB(n)$ and any leaf $i$ of $T$ we have:
\begin{equation}
\label{uv1}
\sup_{l>0, \sum_{e}l(e) = L}  \{FP_{T}(i)-ES_T(i)\} \leq \lambda_n L.
\end{equation}
Recall from Eqn. \eqref{FP_minus_ES} that
$FP_{T}(i)-ES_T(i)= \sum_{j=1}^h l_j \left(\frac{1}{n_j} -\frac{1}{2^j}\right)$,
and observe that  for $j=0$, we have $n_j=1=2^j$. Thus, in particular, we have:
$$FP_{T}(i)-ES_T(i) \leq \max_{0 \leq j \leq h}\left\{\left(\frac{1}{n_j} 
-\frac{1}{2^j}\right)\right\} \cdot \sum_{j=0}^h l_j  \leq \max_{0 \leq j \leq h}\left\{\left(\frac{1}{n_j} -\frac{1}{2^j}\right)\right\} \cdot L.$$
Moreover, for any tree $T$, we always have: $n_j \geq j+1$ for each 
$j\geq 0$, and therefore $$FP_{T}(i)-ES_T(i) \leq  \max_{j\geq 
0}\left\{\left(\frac{1}{j+1} -\frac{1}{2^j}\right)\right\} \cdot L.$$
Let $c_j:=\frac{1}{j+1}-\frac{1}{2^j}$ for $j \geq 0$. The sequence 
$c_j$ for $j=0,1,2, \ldots$ begins as follows:
$$0, 0, \frac{1}{12}, \frac{1}{8}, \frac{11}{80} =0.1375,$$
after which the values in the sequence begin to decline.  This 
establishes Inequality~(\ref{uv1}), as required.

To show that Inequality (\ref{uv1}) is an equality, it suffices to show that for each $n\geq 2$ and every $\epsilon>0$ there 
exists a tree $T \in RB(n)$ with positive edge lengths and there is a leaf $i$ 
of $T$ for which
$FP_{T}(i)-ES_T(i) \geq \lambda_nL-\epsilon$.
To this end, let $T_n$ be a rooted caterpillar tree with $n$ leaves, let $i$ be a leaf 
in the cherry of $T_n$, let the interior edge at distance
$k=\min\{4, n-2\}$ from leaf $i$ have length $L-\epsilon$, and the lengths of all 
the remaining edges of $T_n$ have strictly positive lengths that sum to 
$\epsilon$. In this case:
$$FP_{T}(i)-ES_T(i) \geq \left(\frac{1}{k+1} - 
\frac{1}{2^{k}}\right)\cdot (L-\epsilon)  \geq \lambda_n L 
-\epsilon,$$
holds for $T=T_n$ as required.

\bigskip

We turn now to $\Delta_n(ES-FP; L)$.
We first show that for any given tree 
$T \in RB(n)$ and any leaf $i$ of $T$:
\begin{equation}
\label{eee}
\sup_{l>0, \sum_{e}l(e) = L}  \{ES_{T}(i)-FP_T(i)\} \leq 
\left(\frac{1}{2}- \frac{1}{n-1}\right)L.
\end{equation}
 From Eqn.~\eqref{ES_minus_FP}, we have:
$ES_{T}(i)-FP_T(i)= \sum_{j=1}^h l_j \left(\frac{1}{2^j}- 
\frac{1}{n_j}\right).$
Now, $\left(\frac{1}{2^j}- \frac{1}{n_j}\right)$ takes a value that is, at most,
$\frac{1}{2} - \frac{1}{n-1}$ for all $j \geq 1$. Thus:
$$ES_{T}(i)-FP_T(i) \leq \left(\frac{1}{2} - \frac{1}{n-1}\right) 
\sum_{j=1}^h l_j \leq \left(\frac{1}{2} - \frac{1}{n-1}\right)L,$$
as required to establish Inequality~(\ref{eee}).

To show that Inequality~(\ref{eee}) is an equality it  suffices to show that for each $n \geq 3$, and every $\epsilon>0$ there 
exists a tree $T \in RB(n)$ with positive edge lengths, and there is a leaf $i$ 
of $T$ for which
$$ES_{T}(i)-FP_T(i) \geq \left(\frac{1}{2}- \frac{1}{n-1}\right) L  -\epsilon.$$
To this end, let $T \in RB(n)$ be any tree for which the children of 
the root consist of a leaf $j$ and an interior vertex $v$, where the 
children of $v$ consist of a leaf $i$ and a subtree of $T$ having $n-2$ 
leaves. Let the edge between the root and $v$ have length $L-\epsilon$
and let the remaining edges have strictly positive lengths that sum to 
$\epsilon$.  Then
$$ES_{T}(i)-FP_T(i) = \left(\frac{1}{2}- \frac{1}{n-1}\right) 
(L-\epsilon) \geq  \left(\frac{1}{2}- \frac{1}{n-1}\right) L 
-\epsilon,$$
as required.

\bigskip

{\em Part (ii):} 
We now impose the {\bf (MC)} condition. For $\Delta_n(FP-ES; L)$, observe that our proof of Inequality \eqref{uv1}  invoked the inequality $\sum_{j=0}^h l_j \leq L$. When {\bf (MC)} holds,
we have a tighter bound of the sum, namely $\sum_{j=0}^h l_j \leq L/2$ since there is at least one other leaf $k$ of $T$ for which the path from the root of $T$ to $k$ also has length $\sum_{j=0}^h l_j$ (by {\bf (MC)}) and is edge-disjoint from the path from $\rho$ to $i$ (thus $2\sum_{j=0}^h l_j \leq L$).  
In this way, we claim that  $$\Delta_n(FP-ES; L)  \leq \lambda_nL/2,$$ when {\bf (MC)} holds.

To show that  this  inequality  holds 
 it suffices to show that for each $n \geq 2$, and every $\epsilon>0$ there 
exists a tree $T_n \in RB(n)$ with positive edge lengths, and there is a leaf $i$ 
of $T$ for which:
\begin{equation}
\label{geqineq}
FP_{T}(i)-ES_T(i) \geq \lambda_n L/2-O(\epsilon),
\end{equation}
where $O(\epsilon)$ is a term that tends to zero as $\epsilon \rightarrow 0$.
This trivially holds for $n=2,3$ (indeed it holds for $\epsilon=0$); while for $n=4,5,6$ let $T_n$ be a caterpillar tree with $i$ being a leaf in its cherry.
Let $(\rho, v)$ and $(\rho, j)$ denote the two edges incident with the root $\rho$ of $T$, where $j$ is a leaf. 
Assign the edge $(\rho, v)$  length $L/2 - 5\epsilon / 2$ and the edge $(\rho, j)$ length $L/2 - \epsilon$. We then assign the path from $v$ to $i$
and from $v$ to its adjacent leaf (which exists since it is a caterpillar) a length of $3 \epsilon / 2$. Now adjust the remaining edge lengths so they sum to $\epsilon/2$ and so that the {\bf (MC)} condition holds for $T$ (see Fig.~\ref{fig5}(a) for the case $n=6$). This assignment then satisfies Inequality~(\ref{geqineq}) for $T=T_n$, as required.

\begin{figure}[htbp]
	\centering
	\includegraphics[scale=0.8]{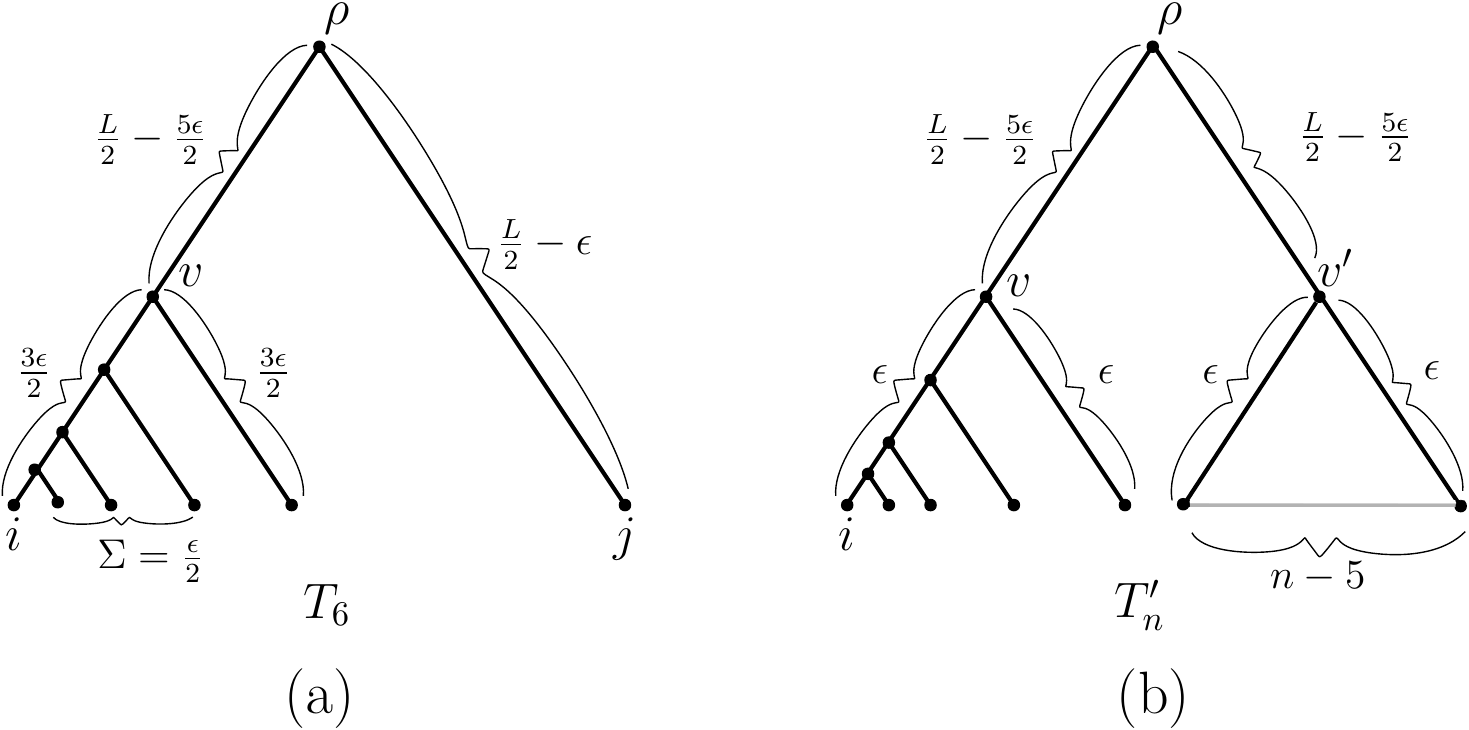}
	\caption{Trees for the proof of Theorem~\ref{thm3}.}
	\label{fig5}
\end{figure}

For the  case $n>6$, let $T'_n$ be obtained from $T_6$ (in the previous argument) by replacing leaf $j$ by an arbitrary rooted binary subtree with $n-5$ leaves with root $v'$. Assign length $L/2-5\epsilon/2$ to each of the two edges 
($(\rho, v)$  and $(\rho, v')$) that are incident with the root.  Set the length of the  path from $v$ to leaf $i$, and the length of the path from $v$ to its adjacent leaf to  equal $\epsilon$, and set the length of each of two disjoint paths from $v'$ to some pair of descendant leaves  also equal to $\epsilon$  (see Fig.~\ref{fig5}(b)).  Finally, 
 select edge lengths within these two subtrees so as to maintain the {\bf (MC)} condition and so that the sum of the lengths of the additional edges added to  these two subtrees is $\epsilon$. In this way, the {\bf (MC)} condition holds for the tree, $L  = 2(L/2-5\epsilon/2)+ 2(2\epsilon) + \epsilon$ equals the sum of the edge lengths, and  Inequality~(\ref{geqineq}) holds for $T=T_n'$, as required.

We now establish Part(ii) for the quantity $\Delta_n(ES-FP; L)$. The argument for the inequality $\Delta_n(ES-FP; L)  \leq \left(\frac{1}{2}- \frac{1}{n-1}\right)L/2$ when {\bf (MC)} holds is identical to the corresponding inequality  for $\Delta_n(FP-ES; L)$ under {\bf (MC)}. 
Moreover, to show that this inequality can be realised, consider again the tree $T \in RB(n)$  described in the previous paragraph, to which we will assign similar but modified edge lengths (we can assume that $n \geq 4$, since the equality holds when $n=3$).  For the edge between the root and $v$, assign length $L/2-\epsilon$; for the edge between the root and leaf $j$,  assign length $L/2-2\epsilon/5$;  for the edge $(v,i)$ assign length $3\epsilon/5$ and assign the lengths of the remaining edges  so that they sum to $4\epsilon/5$ and are chosen so as to satisfy {\bf (MC)} (this is possible, since we are assuming that $n\geq 4$).  In this way, the total sum of edge lengths is $L$ and the path length from the root to each leaf takes the same value (namely,  $L/2-2\epsilon/5$), and the result of Part (ii) for $\Delta_n(ES-FP; L)$ now follows. 
\end{proof} \hfill$\Box$ \\


\section{For which tree shapes do FP and ES coincide?}
\label{sec3}
In the following, we will analyse for which tree shapes FP and ES coincide. Therefore, recall that a rooted binary tree $T$ can be decomposed into its two maximal pending subtrees $T'$ and $T''$ rooted at the direct descendants of the root. We denote this by writing $T=(T', T'')$ (note that the order of $T'$ and $T''$ is not important, thus $T=(T',T'')= (T'', T')$).
Now, let $T$ be a binary tree with  $n=2^h$ leaves, in which each leaf is separated from the root by a path of precisely $h$ edges. We call this (unique shape) tree the \emph{fully balanced tree of height $h$} and denote it by $T_h^{fb}$. Note that we have $T_h^{fb} = (T_{h-1}^{fb}, T_{h-1}^{fb})$, i.e. both maximal pending subtrees of a fully balanced tree of height $h$ are fully balanced trees of height $h-1$. Using the notation of Fig. \ref{fig2} it is thus easy to see that for a leaf $i$ of $T^{fb}_h$ and an edge $e_j$ on the path from the root of $T_h^{fb}$ to leaf $i$ we always have: $n_j=2^j$.  It is now not difficult to show that FP and ES coincide (for all choices of reference leaf $i$) on any fully balanced tree. However, there are other tree shapes for which FP and ES coincide (e.g. the tree $T'$ in Fig.~\ref{fig1}(b)). Therefore, let $T^{sb}$ be a rooted binary tree, whose two maximal pending subtrees $T'$ and $T''$ are both fully balanced trees of height $h'$ and $h''$, respectively (where $h'$ and $h''$ are not necessarily identical), i.e. $T^{sb}=(T^{fb}_{h'}, T^{fb}_{h''})$. We call such a tree a \emph{semi-balanced tree}. Then,

\begin{theorem}
\label{thm4}
Let $T$ be a rooted binary phylogenetic tree on taxon set $[n]$ and
non-negative edge lengths $l(e)$. Then, we have:
$ES_T(i) = FP_T(i)$ for all $i \in [n]$ and all assignments of positive edge
lengths if and only if $T$ is a semi-balanced tree.
\end{theorem}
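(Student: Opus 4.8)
The plan is to reduce the identity $FP_T(i)=ES_T(i)$ to a purely combinatorial condition on the numbers $n_j$ along root-to-leaf paths, and then read off which tree shapes satisfy it. Fix a leaf $i$ and write its root-to-leaf path as $e_h,\ldots,e_0$ using the notation of Fig.~\ref{fig2}. By Eqn.~\eqref{FP_minus_ES}, $FP_T(i)-ES_T(i)=\sum_{j=1}^{h} l_j\bigl(\frac{1}{n_j}-\frac{1}{2^j}\bigr)$, which is a linear form in the edge lengths $l_1,\ldots,l_h$. A linear form vanishing for all strictly positive values of its variables has all coefficients zero, so $FP_T(i)=ES_T(i)$ holds for every positive length assignment if and only if $n_j=2^j$ for every edge $e_j$ on the path from the root to $i$ (the case $j=0$ being automatic). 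Hence the hypothesis of the theorem (for all $i$ and all positive edge lengths) is equivalent to: for every leaf $i$, $n_j=2^j$ holds along the entire path from the root to $i$.

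For the ``if'' direction, suppose $T=(T^{fb}_{h'},T^{fb}_{h''})$ is semi-balanced and let $i$ be a leaf, say $i$ a leaf of $T^{fb}_{h'}$. The root-to-$i$ path consists of the $h'$ edges inside $T^{fb}_{h'}$ together with the single edge joining the root of $T$ to the root of $T^{fb}_{h'}$, so $h=h'$. For the lower $h'$ edges we have $n_j=2^j$ (as recorded in the paragraph preceding the theorem for fully balanced trees), while for the top edge $n_{h'}$ equals the number of leaves of $T^{fb}_{h'}$, namely $2^{h'}=2^{h}$. Thus $n_j=2^j$ along the whole path, giving $FP_T(i)=ES_T(i)$; the case where $i$ is a leaf of $T^{fb}_{h''}$ is symmetric. (This also covers $n=2,3$, since every rooted binary tree on at most three leaves is semi-balanced.)

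For the ``only if'' direction, assume $n_j=2^j$ holds along every root-to-leaf path, and decompose $T=(T',T'')$ at the root. Fix a leaf $i$ of $T'$ and let $d$ be its depth in $T'$; the topmost edge of the root-to-$i$ path joins the root of $T$ to the root of $T'$ and occupies index $d$ on that path, so the condition gives (number of leaves of $T'$) $=2^{d}$. Since the left-hand side is independent of the choice of $i$, all leaves of $T'$ have a common depth $h'$ in $T'$. A short induction on $h'$ then shows $T'=T^{fb}_{h'}$: the base case $h'=0$ is immediate, and for $h'\ge 1$ the root of $T'$ cannot be a leaf, so (being binary) it has exactly two children, and each subtree below it has all its leaves at depth $h'-1$, hence equals $T^{fb}_{h'-1}$ by induction. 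Symmetrically $T''=T^{fb}_{h''}$, so $T$ is semi-balanced.

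The argument is essentially self-contained. The only points needing a little care are the elementary fact that a linear form vanishing on the positive orthant is identically zero (which yields the pointwise condition $n_j=2^j$) and the induction establishing that ``all leaves at a common depth'' forces the fully balanced shape; I do not expect either to present a genuine obstacle.
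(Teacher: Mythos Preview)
Your proof is correct and takes a more direct route than the paper in the ``only if'' direction. You first reduce the hypothesis, via the linearity of $FP_T(i)-ES_T(i)$ in the edge lengths $l_1,\ldots,l_h$, to the combinatorial condition $n_j=2^j$ along every root-to-leaf path, and then read off the tree shape from the top-edge instance alone: $n_h$ equals the number of leaves in the relevant maximal pending subtree, so that number is $2^{h}=2^{d}$ (with $d$ the depth of $i$ in that subtree), forcing all leaves of each maximal pending subtree to share a common depth; a short induction then gives the fully balanced shape. The paper instead argues by contradiction: assuming $T=(T',T'')$ is not semi-balanced, it locates a minimal interior vertex $v$ in (say) $T'$ at which the two subtrees below have unequal leaf counts (hence, by minimality, both are fully balanced), picks one leaf $i'$ from each side, and uses the ``if'' direction inside $T_v$ together with the assumed FP$=$ES identity for both leaves to derive the incompatible equations $\sum_j l_j/n_j=\sum_j l_j/2^{h'_v+j}$ and $\sum_j l_j/n_j=\sum_j l_j/2^{h''_v+j}$ for the edges above $v$. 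Your approach is cleaner and isolates the equivalent pointwise condition $n_j=2^j$ explicitly; the paper's approach avoids the (elementary) ``linear form vanishing on the open positive orthant implies all coefficients vanish'' step but is otherwise more involved and reuses the ``if'' direction inside the contradiction argument.
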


\begin{proof}
We first show that  if $T$ is a semi-balanced tree (i.e. $T=T^{sb}$) we have
$ES_T(i) = FP_T(i)$ for all $i \in [n]$. Therefore, let $T^{fb}_{h'}$ and
$T^{fb}_{h''}$ denote the two maximal pending subtrees of $T$.
Recall that
$$ FP_T(i) = \sum_{e \in P(T; \rho, i)} \frac{1}{n(e)}l(e) \mbox{ and }
ES_T(i) = \sum_{e \in P(T; \rho, i)} \frac{1}{\Pi(e,i)} l(e). $$
As both sums just run over edges on the path from the root to leaf $i$,
$FP_T(i)$ and $ES_T(i)$ are independent of $T^{fb}_{h''}$ if $i \in
T^{fb}_{h'}$ and vice versa.
Let $i$ be a leaf of
$T^{fb}_{h'}$.
As $T^{fb}_{h'}$ is a fully balanced tree, we have $n_j=2^j$ for all $j=1,
\ldots, h'$, and thus, using Eqn. \eqref{FP_minus_ES}, we immediately have
$$FP_T(i)-ES_T(i) = \sum\limits_{j=1}^{h'} l_j \left( \frac{1}{n_j} -
\frac{1}{2^j} \right) = 0,$$
(i.e. $FP_T(i) = ES_T(i)$).
Analogously, this holds for all leaves of $T^{fb}_{h''}$,  so $ES_T(i) =
FP_T(i)$ for all $i \in [n]$. \\

\bigskip

Now suppose that $FP_T(i)=ES_T(i)$ for all $i \in [n]$.
By way of contradiction assume that $T=(T',T'')$ is not a semi-balanced
tree, i.e. assume that at least one of the maximal pending subtrees of $T$,
say $T'$, is not a fully balanced tree.
This implies that there exists an interior vertex $v$ in $T'$ with the
following two properties:
\begin{enumerate}
   \item[{\rm (i)}] For the subtree $T_v = (T'_v, T''_v)$ rooted at $v$ we
have: $n'_v \neq n''_{v}$, where $n'_v$ and $n''_{v}$ denote the number of
leaves of $T'_v$ and $T''_v$, respectively.
   \item[{\rm (ii)}] $v$ is chosen so that $T_v$ is a minimal subtree of
$T'$ satisfying property {\rm (i)} (in the sense that there exists no
subtree $T_{w}$ of $T'$ on fewer leaves that has this property).
\end{enumerate}
In particular, this implies that both maximal pending subtrees of $T_v$ are
fully balanced trees.
Without loss of generality we may assume that $n'_v > n''_v$ (otherwise
exchange the roles of $T'_v$ and $T''_v$ ), in which case $h'_v > h''_v$.

Now, for a leaf $i$ and an edge $e$ of $T$, we use $\delta_{e}^{FP}(i)$ and
$\delta_{e}^{ES}(i)$ to denote the contribution of edge $e$ to $FP_T(i)$,
respectively $ES_T(i)$, where
$$ \delta_{e}^{FP}(i) = \begin{cases}
   \frac{l(e)}{n(e)}, \, \text{ if } \, e \in P(T; \rho, i); \\
   0, \text{ otherwise}.
   \end{cases} \text{ and }
\delta_e^{ES}(i) = \begin{cases}
   \frac{l(e)}{\Pi(e,i)} \, \text{ if } \, e \in P(T; \rho, i); \\
   0, \text{ otherwise}.
   \end{cases}$$

Let $\Delta^{FP}(i) = \sum_{e \in T_v} \delta_e^{FP}(i)$ and $\Delta^{ES}(i)
=  \sum_{e \in T_v} \delta_e^{ES}(i)$.
Now, as both maximal pending subtrees of $T_v$ are fully balanced trees, we
can use the first part of the proof to conclude that for each $i \in T_v$:
$\Delta^{FP}(i) = \Delta^{ES}(i)$ and we denote this common value by
$\Delta(i)$.

Now, let $l_1, l_2, \ldots, l_h$ be the lengths of the edges $e_1, e_2,
\ldots, e_h$ on the path from vertex $v$ to the root and let $n_j$ be the
number of leaves descended from edge $e_j$.   Let $i'$ be a leaf of $T'_v$
and let $i''$ be a leaf of $T''_v$. By assumption, $FP_T(i)=ES_T(i)$ for all
$i \in [n]$, and so we have
   \begin{align*}
   FP_T(i') &= \Delta(i') + \sum\limits_{j=1}^h \frac{l_j}{n_j} =
\Delta(i') +  \sum\limits_{j=1}^h \frac{l_j}{2^{h'_v + j}} = ES_T(i'),
\text{ and } \\
   FP_T(i'') &= \Delta(i'') + \sum\limits_{j=1}^h \frac{l_j}{n_j} =
\Delta(i'') + \sum\limits_{j=1}^h \frac{l_j}{2^{h''_v + j}} = ES_T(i'').
   \end{align*}
In particular
\begin{align*}
\sum\limits_{j=1}^h \frac{l_j}{n_j} &= \sum\limits_{j=1}^h
\frac{l_j}{2^{h'_v + j}} \mbox{ and } \sum\limits_{j=1}^h \frac{l_j}{n_j} =
\sum\limits_{j=1}^h \frac{l_j}{2^{h''_v + j}}.
\end{align*}

However, as $h'_v > h''_v$ and $l_j > 0$ for all $j$, this is a
contradiction. 
A similar argument yields a contradiction for the assumption that $T''$ is
not a fully balanced tree. Thus, $T$ has to be a semi-balanced tree, which
completes the proof.
\end{proof} \hfill$\Box$ \\ 

\section{Uniqueness of SV for phylogenetic tree games}
\label{sec-shap}
Another linear PD index frequently used is the so-called Shapley value (SV), which originates from cooperative game theory.
Recall that a cooperative game is a pair $([n], \nu)$ consisting of a set of players $[n]=\{1, \ldots, n\}$ and a characteristic function $\nu: 2^{[n]} \rightarrow \mathbb{R}$ that assigns a real value to all subsets of $[n]$ with $\nu(\emptyset) = 0$. A function $\varphi_{\nu} : [n] \rightarrow \mathbb{R}$ that assigns a payoff to each player is called a \emph{value} for the game. 
One such value is the Shapley value (\citet{Shapley1953}), which is defined as follows:
\begin{equation}
\varphi_{\nu}(i) = \frac{1}{n!} \sum\limits_{S \subseteq [n]: \, i \in S} (|S|-1)! \left(n-|S|)! \left( \nu(S) - \nu(S \setminus \{i\} \right)\right). 
\label{def_sv}
\end{equation}

Note that the Shapley value of a player $i$ reflects the average marginal contribution of $i$ to the game. Moreover, it is characterised by the following four axioms: 
\begin{enumerate}
\item Pareto efficiency: $\sum_{i \in [n]} \varphi_{\nu}(i) = \nu([n])$.
\item Symmetry: $\forall \, i, j$ with $i \neq j$ and $\forall \, C \subseteq [n] \setminus\{i,j\}$, if $\nu(C \cup \{i\}) = \nu(C \cup \{j\})$, then $\varphi_{\nu}(i) = \varphi_{\nu}(j)$.
\item Dummy axiom: If $ \forall \, C \subseteq [n] \setminus \{i\}$, $\nu(C \cup \{i\}) = \nu(C)$, then $\varphi_{\nu}(i)=0$.
\item Additivity: $\forall \, \nu_1, \nu_2, \forall i \in [n], \, \varphi_{\nu_1 + \nu_2}(i) = \varphi_{\nu_1}(i) + \varphi_{\nu_2}(i).$
\end{enumerate}

In fact, the Shapley value is the \emph{unique} value satisfying these four axioms. 
\begin{theorem} \label{thm_unique}
The Shapley value is the unique value satisfying Axioms 1--4 (\citet{Shapley1953, Winter2002}).
\end{theorem}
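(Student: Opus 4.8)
The plan is to prove the classical Shapley uniqueness theorem by the standard ``basis of unanimity games'' argument. First I would recall that the characteristic functions $\nu\colon 2^{[n]}\to\mathbb{R}$ with $\nu(\emptyset)=0$ form a real vector space of dimension $2^n-1$, and that a convenient basis is given by the \emph{unanimity games} $u_S$ for $\emptyset\neq S\subseteq[n]$, where $u_S(C)=1$ if $S\subseteq C$ and $u_S(C)=0$ otherwise. I would verify that every such $\nu$ can be written uniquely as $\nu=\sum_{\emptyset\neq S\subseteq[n]}c_S\, u_S$ with coefficients $c_S=\sum_{R\subseteq S}(-1)^{|S|-|R|}\nu(R)$ (a Möbius-inversion computation); the key point is merely that the $u_S$ span the space and there are $2^n-1$ of them, so they form a basis.

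Next I would show that the four axioms pin down $\varphi_\nu$ on each scaled unanimity game $c\, u_S$. For a fixed nonempty $S$ and scalar $c$, every player $i\notin S$ is a dummy in the game $c\,u_S$ (adding $i$ to any coalition $C$ never changes whether $S\subseteq C$), so Axiom~3 forces $\varphi_{c\,u_S}(i)=0$ for $i\notin S$. Any two players $i,j\in S$ are symmetric in $c\,u_S$ (for $C\subseteq[n]\setminus\{i,j\}$ we have $S\not\subseteq C\cup\{i\}$ and $S\not\subseteq C\cup\{j\}$ unless $S\setminus\{i,j\}\subseteq C$, in which case both equal $c$), so Axiom~2 forces $\varphi_{c\,u_S}(i)=\varphi_{c\,u_S}(j)$ for all $i,j\in S$. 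Combining this with Axiom~1, $\sum_{i}\varphi_{c\,u_S}(i)=c\,u_S([n])=c$, yields $\varphi_{c\,u_S}(i)=c/|S|$ for $i\in S$ and $0$ otherwise — a value completely determined by the axioms.

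Finally I would invoke Axiom~4 (additivity): writing $\nu=\sum_{\emptyset\neq S\subseteq[n]}c_S\, u_S$ and applying additivity repeatedly gives $\varphi_\nu(i)=\sum_{\emptyset\neq S\subseteq[n]}\varphi_{c_S u_S}(i)=\sum_{S\ni i}c_S/|S|$, so any value satisfying Axioms~1--4 is uniquely determined by $\nu$. To finish, it remains to check that the Shapley value defined in Eqn.~\eqref{def_sv} actually does satisfy all four axioms: efficiency follows from the telescoping of marginal contributions over a uniform random ordering, symmetry and the dummy property are immediate from the marginal-contribution form $\nu(S)-\nu(S\setminus\{i\})$, and additivity is linearity of Eqn.~\eqref{def_sv} in $\nu$. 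Existence and uniqueness together give the theorem.

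The main obstacle — really the only non-routine part — is establishing that the unanimity games form a basis, i.e.\ the Möbius-inversion formula for the coefficients $c_S$; everything else is a short verification. Since this is a well-known classical result (\citet{Shapley1953, Winter2002}), I would present the argument concisely, citing those sources, rather than belabouring the linear-algebra details.
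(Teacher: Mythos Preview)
The paper does not actually prove Theorem~\ref{thm_unique}; it is stated as a classical result with citations to \citet{Shapley1953} and \citet{Winter2002}, and the text moves on immediately. Your proposal is the standard, correct proof via the unanimity-game basis, so there is nothing to compare against in the paper itself --- your write-up would simply supply a proof where the paper chose to cite one.
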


Note that the formulation described here is slightly different from the original formulation in \citet{Shapley1953}.  On the one hand, \citet{Shapley1953} used a framework consisting of three axioms:  symmetry, additivity, and a carrier axiom,  the latter comprising both Pareto efficiency and the dummy axiom (see \citet{Winter2002} for details). On the other hand, \citet{Shapley1953} made the additional assumption that $\nu$ is a superadditive function (i.e. $\nu(A \cup B) \geq \nu(A) + \nu(B)$ for all pairs of disjoint sets $A,B$), which was later relaxed by \citet{Dubey1975}. \\

In the phylogenetic setting, $\nu(S)$ is taken to be the {\em phylogenetic diversity of $S$ on $T$}\footnote{Note that PD is not a superadditive function. In fact, it is submodular, satisfying the property that $\nu(A \cup B)\leq \nu(A)+\nu(B)-\nu(A \cap B)$ for all $A, B$ (cf. Proposition 6.13 in \citet{Book_Mike}).}, denoted \blue{by} $PD_{\blue{T}}(S)$, and defined as the sum of lengths of the edges in the minimal subtree of $T$ that contains $S$ and the root of $T$ (cf. \citet{Faith1992}). As an example, for the tree $T'$ depicted in Fig. \ref{fig1}(b), and the subset $S=\{1,2,4\}$ of leaves, we have $PD_{\blue{T'}}(S)=11$. 

Considering the leaf set $[n]$ of a rooted phylogenetic tree $T$ as the set of players and phylogenetic diversity as the characteristic function of a game, Eqn. \eqref{def_sv} becomes:

$$
SV_T(i) = \frac{1}{n!} \sum\limits_{S \subseteq [n]: \, i \in S} (|S|-1)!  (n-|S|)!  \left(PD_{\blue{T}}(S) - PD_{\blue{T}}(S \setminus \{i\} )\right).
$$
Note that in contrast to the previous two sections we are not assuming in this section that $T$ is a binary tree.

In an important paper, \citet{fuc} proved that the Shapley value and the Fair Proportion index on rooted phylogenetic trees agree (see also \citet{Book_Mike} and \citet{sta}).

\begin{theorem}[\citet{fuc}] \label{thm_sv_fp}
The Fair Proportion index and the Shapley value are identical \blue{on rooted phylogenetic trees}, i.e. for all $i \in X$: $$FP_T(i) = SV_T(i).$$
\end{theorem}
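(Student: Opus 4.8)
The plan is to prove that $FP_T(i) = SV_T(i)$ for every leaf $i$ of a rooted phylogenetic tree $T$ by exploiting the linearity of both indices in the edge lengths, which reduces the problem to verifying equality of the coefficients $\gamma_T(i,e)$ in the expansion \eqref{lindiv}. First I would record that $PD_T(S) = \sum_{e} l(e)\cdot \mathbf{1}[e \in \text{span}(S)]$, where $\text{span}(S)$ denotes the minimal subtree connecting $S$ to the root; this makes $S \mapsto PD_T(S)$ a sum, over edges $e$, of scaled "unanimity-type" games $\nu_e$ with $\nu_e(S) = l(e)$ if $e$ lies on the path from the root to some member of $S$, and $\nu_e(S)=0$ otherwise. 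By the Additivity axiom (Axiom 4) for the Shapley value, $SV_T(i) = \sum_e SV_{\nu_e}(i)$, so it suffices to compute the Shapley value of each single-edge game $\nu_e$ and check it equals $\gamma_T^{FP}(i,e)\, l(e)$, where $\gamma_T^{FP}(i,e) = \frac{1}{n(e)}$ if $e \in P(T;\rho,i)$ and $0$ otherwise.

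Next I would analyze the game $\nu_e$ for a fixed edge $e$. Let $C_e \subseteq [n]$ be the set of leaves descended from $e$, so $|C_e| = n(e)$. Then $\nu_e(S) = l(e)$ precisely when $S \cap C_e \neq \emptyset$, and $\nu_e(S) = 0$ otherwise. For a leaf $i \notin C_e$, adding $i$ to any coalition never changes whether $S$ meets $C_e$, so $i$ is a dummy player in $\nu_e$; by the Dummy axiom (Axiom 3), $SV_{\nu_e}(i) = 0$, matching $\gamma_T^{FP}(i,e) = 0$. For a leaf $i \in C_e$, all players of $C_e$ are symmetric in $\nu_e$ (swapping two of them leaves $\nu_e$ unchanged) and all players outside $C_e$ are dummies; so by Symmetry (Axiom 2), Dummy (Axiom 3), and Pareto efficiency (Axiom 1), $\sum_{j \in C_e} SV_{\nu_e}(j) = \nu_e([n]) = l(e)$ and all the summands are equal, giving $SV_{\nu_e}(i) = l(e)/|C_e| = l(e)/n(e)$, which is exactly $\gamma_T^{FP}(i,e)\, l(e)$. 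Summing over all edges $e$ and using additivity then yields $SV_T(i) = \sum_{e\in P(T;\rho,i)} \frac{l(e)}{n(e)} = FP_T(i)$.

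I expect the main subtlety to be careful bookkeeping around the boundary cases and the role of the root: one must be precise that "descended from $e$" is defined so that for the pendant edge at $i$ we get $n(e)=1$ (hence coefficient $1$), and that edges not on any root-to-leaf path through $C_e$ genuinely give dummy status to the relevant players. A secondary point worth spelling out is that the Additivity axiom as stated applies to sums of two characteristic functions, so to decompose $PD_T$ into $|E(T)|$ pieces one invokes additivity inductively (or notes it extends to finite sums); this is routine but should be mentioned. One could alternatively give a direct combinatorial proof by plugging the edge-decomposition of $PD_T$ into the defining sum \eqref{def_sv} and evaluating $\sum_{S \ni i,\, S \cap C_e \neq \emptyset} (|S|-1)!(n-|S|)!$ against $\sum_{S \ni i,\, (S\setminus\{i\})\cap C_e = \emptyset}(\cdots)$, but the axiomatic route via Theorem~\ref{thm_unique} is cleaner and avoids the binomial-sum manipulation; I would present the axiomatic argument as the main proof and perhaps remark that the direct computation also works.
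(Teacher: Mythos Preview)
The paper does not give its own proof of this theorem; it is stated with attribution to \citet{fuc} and used as a black box. So there is nothing to compare against directly, and your proposed argument is correct and complete as a proof.

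That said, your argument is essentially the same computation the paper carries out in the proof of Theorem~\ref{thm_FP_unique}: decompose $PD_T$ into the basis games $PD_{T_e}$, observe that leaves outside $N(e)$ are dummies and leaves inside $N(e)$ are symmetric, apply efficiency to get the value $1/n(e)$ on $N(e)$, and then invoke additivity (and homogeneity) to pass to arbitrary edge lengths. The paper uses this reasoning to show that \emph{any} function satisfying Axioms~1--4 must coincide with FP (and hence with SV, citing Theorem~\ref{thm_sv_fp}); you are using the same reasoning applied specifically to SV itself, which directly yields $SV_T = FP_T$ without needing the cited result. In that sense your route is slightly more economical: it shows that the argument in Theorem~\ref{thm_FP_unique} already contains a self-contained proof of Theorem~\ref{thm_sv_fp}, so the external citation is not strictly necessary. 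Your remarks about extending additivity to finite sums and handling the pendant-edge case $n(e)=1$ are appropriate minor points to spell out.
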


In the following we will use this result to show that SV is the unique value satisfying Axioms 1--4 for the sub-class of games induced by a rooted tree $T$ and the phylogenetic diversity function. This is not obvious since \blue{(as noted by \cite{Haake2008} in the setting of PD on unrooted trees),}  the class of games \blue{based on PD on a rooted tree} is smaller than the class of all games (for which Theorem \ref{thm_unique} states that SV is unique). \blue{Apart from SV there might be other functions that satisfy these 4 axioms} for this smaller class of games, \blue{and so SV} might not be uniquely determined by them.   In Theorem \ref{thm_FP_unique}, however, we show that SV is still uniquely characterised by the 4 axioms for this smaller class of games.  \cite{Haake2008}, by contrast, 
introduced an additional axiom to obtain their characterization (Theorem 9 of that paper).

\blue{Let} $\mathcal{T}_{[n],PD_T}$ denote the class of games induced by a rooted phylogenetic tree $T$ with leaf set $[n]$ and non-negative edge lengths, and the phylogenetic diversity function on $T$.
Moreover, let a pair $([n], PD_T)$ denote a \emph{PD game}. Note that such a pair can be represented as a linear combination of so-called \emph{basis games} \blue{$PD_{T_e}$} (for $e \in E(T)$), where \blue{$PD_{T_e}$} corresponds to the $PD$ game on tree $T_e$, in which edge $e$ has length 1 and all other edges have \blue{length} 0. It can be shown that the family $\blue{(PD_{T_e})}_{e \in E(T)}$ is linearly independent and forms a basis of $\mathcal{T}_{[n],PD_T}$ of dimension $|E(T)|$.

The following theorem provides an axiomatic characterization of SV for games in $\mathcal{T}_{[n],PD_T}$.

\begin{theorem} \label{thm_FP_unique}
\blue{There is a unique function}  $$\blue{\psi_{PD_T}}: \mathcal{T}_{[n],PD_T} \rightarrow \mathbb{R}^n$$ that satisfies Axioms 1--4.  This  function coincides with the Shapley value, i.e. $\blue{\psi_{PD_T}}(i) = SV_T(i)$ for all $i \in [n]$.
\end{theorem}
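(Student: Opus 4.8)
The plan is to split the statement into the (easy) existence claim and the (slightly more delicate) uniqueness claim. For existence, I would simply note that the Shapley value satisfies Axioms~1--4 on the class of \emph{all} cooperative games on $[n]$ by Theorem~\ref{thm_unique}, hence in particular on the sub-class $\mathcal{T}_{[n],PD_T}$; and by Theorem~\ref{thm_sv_fp} it agrees with the Fair Proportion index on trees. So $\psi_{PD_T}:=SV_T$ is a function of the required type, and it remains only to show it is the \emph{only} one.

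For uniqueness, let $\psi$ be any function on $\mathcal{T}_{[n],PD_T}$ satisfying Axioms~1--4. The key observation I would isolate first is that the first three axioms alone already determine $\psi$ on every non-negative scalar multiple $c\cdot PD_{T_e}$ of a basis game. Note that $c\cdot PD_{T_e}$ is itself a PD game (the one in which edge $e$ has length $c$ and all other edges have length $0$), so it lies in $\mathcal{T}_{[n],PD_T}$ and the axioms apply to it. Writing $C_e$ for the set of leaves descended from $e$ (so $|C_e| = n(e)$), we have $(c\cdot PD_{T_e})(S) = c$ if $S \cap C_e \neq \emptyset$ and $0$ otherwise. Hence: every leaf $i \notin C_e$ is a dummy in this game, so Axiom~3 gives $\psi(i)=0$; any two leaves of $C_e$ are interchangeable, so Axiom~2 assigns them a common value $x$; and Axiom~1 forces $n(e)\cdot x = (c\cdot PD_{T_e})([n]) = c$, i.e. $x = c/n(e)$. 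Thus $\psi_{c\,PD_{T_e}}(i) = c/n(e)$ for $i \in C_e$ and $0$ otherwise, which is exactly $c\cdot FP_{T_e}(i)$.

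Now I would invoke Axiom~4 (additivity). An arbitrary game $([n], PD_T) \in \mathcal{T}_{[n],PD_T}$ decomposes in the basis as $PD_T = \sum_{e \in E(T)} l(e)\, PD_{T_e}$, and every partial sum $\sum_{e \in F} l(e)\, PD_{T_e}$ with $F \subseteq E(T)$ is again a PD game with non-negative edge lengths, hence stays inside $\mathcal{T}_{[n],PD_T}$. Applying additivity $|E(T)|-1$ times therefore yields
$$\psi_{PD_T}(i) = \sum_{e \in E(T)} \psi_{l(e)\,PD_{T_e}}(i) = \sum_{e \in P(T;\rho,i)} \frac{l(e)}{n(e)} = FP_T(i),$$
since $i \in C_e$ precisely when $e$ lies on the path $P(T;\rho,i)$. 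By Theorem~\ref{thm_sv_fp}, $FP_T(i) = SV_T(i)$, so $\psi = SV_T$, which establishes uniqueness and completes the proof.

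The step I expect to carry the real content is the second paragraph, together with the closure remark in the third: the whole point (and the reason the restricted class does not spoil uniqueness, unlike the situation flagged after Theorem~\ref{thm_sv_fp}) is that every PD game has \emph{non-negative} coordinates in the basis $(PD_{T_e})_{e \in E(T)}$, so additivity is only ever used to \emph{add} games that already belong to $\mathcal{T}_{[n],PD_T}$ — one never needs to subtract games or to evaluate $\psi$ on unanimity-type games lying outside the class. The only genuinely routine things to verify carefully are that $c\,PD_{T_e}$ and the partial sums are legitimate members of $\mathcal{T}_{[n],PD_T}$, and that the hypotheses of Axioms~2 and~3, as stated, are met verbatim by the games $c\,PD_{T_e}$.
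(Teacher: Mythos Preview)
Your proof is correct and follows essentially the same approach as the paper's: determine $\psi$ on the basis games via the dummy, symmetry, and efficiency axioms, then extend to arbitrary PD games by additivity. If anything, your version is a bit more careful, since you make explicit that the scaled basis games $c\cdot PD_{T_e}$ and the partial sums remain in $\mathcal{T}_{[n],PD_T}$ (so that the axioms can legitimately be applied at each step), whereas the paper first computes $\psi$ on $PD_{T_e}$, then derives homogeneity separately, and then appeals to additivity somewhat informally.
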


\begin{proof}
By Theorem \ref{thm_unique}, SV satisfies Axioms 1--4.

Now, let $([n], PD_T)$ be a PD game and let $\blue{\psi_{PD_T}}$ satisfy all Axioms 1--4. We first consider a basis game \blue{$PD_{T_e}$} and determine $\blue{\psi_{PD_{T_e}}}$.

\blue{Let $N(e)$ denote the set of leaves descended from $e$ and let $n(e) = \vert N(e) \vert$. Then, all leaves not in $N(e)$ are dummy players, as for all $j \in [n] \setminus N(e)$, we have that $PD_{T_e}(C \cup \{j\}) = PD_{T_e}(C)$ for all $C \subseteq [n] \setminus \{j\}$. As $\psi_{PD_{T_e}}$ satisfies the dummy axiom, this implies that $\psi_{PD_{T_e}}(j)=0$ for all $j \in [n] \setminus N(e)$. \blue{On the other hand, all leaves in $N(e)$ are symmetric players as for any pair $i, j \in N(e)$ (with $i \neq j$), we have that $PD_{\blue{T_e}}(C\cup \{i\})=PD_{\blue{T_e}}(C\cup \{j\})=1$ holds for all subsets $C$ of $[n] \setminus \{i,j\}$.}}
As this holds for all pairs $i, j \in N(e)$ and as $\psi_{\blue{PD_{T_e}}}$ satisfies symmetry, we can conclude that $\psi_{\blue{PD_{T_e}}}(i) = \psi_{\blue{PD_{T_e}}}(j)$ for all \blue{$i \neq j \in N(e)$}. On the other hand, since $\psi_{\blue{PD_{T_e}}}$ satisfies efficiency, we have
\begin{align*}
PD_{\blue{T_e}}([n]) &= 1 = \sum\limits_{j \in [n]} \psi_{\blue{PD_{T_e}}}(j) \\
&=  \sum\limits_{j \in [n] \setminus N(e)} \underbrace{\psi_{\blue{PD_{T_e}}}(j)}_{=0} + \sum\limits_{i \in N(e)} \psi_{\blue{PD_{T_e}}}(i) =\sum\limits_{i \in N(e)} \psi_{\blue{PD_{T_e}}}(i),
\end{align*}
which -- using symmetry -- implies that $\psi_{\blue{PD_{T_e}}}(i) = \frac{1}{n(e)}$ for all $i \in N(e)$. To summarize, $\psi_{\blue{PD_{T_e}}}(i) = \frac{1}{n(e)}$  for all $i \in N(e)$ and  $\psi_{\blue{PD_{T_e}}}(j) = 0$ for all $j \in [n] \setminus N(e)$. It is easily verified that these values coincide with \blue{the FP index and thus with the SV (by Theorem \ref{thm_sv_fp}).}

\blue{Analogously, one can show that $\psi_{PD_T}$ is a linear function. As $\psi_{PD_T}$ satisfies Axiom 4, it is additive.  Moreover, for all $\lambda \in \mathbb{R}_{\geq 0}$, let $\lambda \cdot PD_{T_e}$ denote the PD game on tree $T_e$, in which edge $e$ has length $\lambda \cdot 1 = \lambda$ and all other edges have length 0. Then, using the same notation and reasoning as above, we have for all $i \in [n] \setminus N(e)$, $\psi_{\lambda PD_{T_e}}(i) = 0$, and for all $i \in N(e)$, $\psi_{\lambda PD_{T_e}}(i) = \lambda/n(e)$. Comparing this with $\psi_{PD_{T_e}}(i)$ from above, it is now easy to see that we have $\psi_{\lambda \,PD_{T_e}}(i) = \lambda \cdot \psi_{PD_{T_e}}(i)$ for all $\lambda \in \mathbb{R}_{\geq 0}$ and all $i \in [n]$. }
 
\blue{Together with the} additivity of $\psi$ and SV this implies that $\psi$ coincides with SV for all games in $\mathcal{T}_{[n],PD_T}$.
\end{proof}

\begin{remark}
{\rm Since SV is the unique index satisfying Pareto efficiency, symmetry, the dummy axiom and additivity for the class of games induced by a rooted tree and PD (by Theorem \ref{thm_FP_unique}), and since SV and FP agree for rooted trees (by Theorem \ref{thm_sv_fp}) and, in general, ES $\neq$ FP, it follows that ES must violate at least one of these four axioms. It can easily be checked that ES satisfies Pareto efficiency, additivity and the dummy axiom, but it may violate symmetry. An example is given in Figure \ref{fig6}, where we have $ES_T(1) \neq ES_T(3)$, even though $PD_{\blue{T}}(C \cup \{1\}) = PD_{\blue{T}}(C \cup \{3\})$ for all $C \subseteq [4]\setminus \{1,3\}$  
(on the other hand, $FP_T(1)=FP_T(3)$).}
\end{remark}

\begin{figure}[htbp]
	\centering
	\includegraphics[scale=0.25]{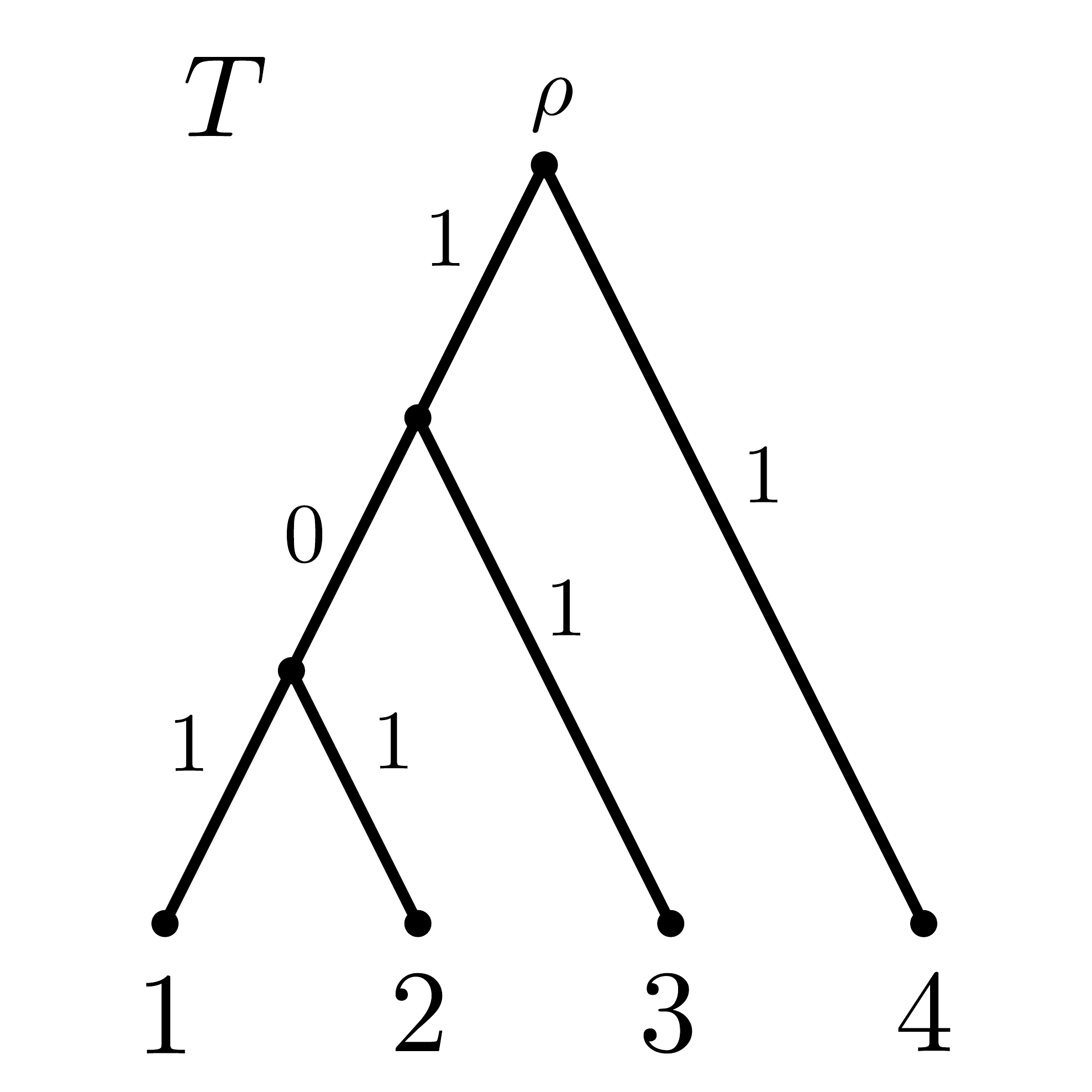}
	\caption{An example showing that ES fails to satisfy the symmetry axiom. Here, $ES_T(1)=5/4 \neq ES_T(3)=3/2$, even though $PD_{\blue{T}}(C \cup \{1\}) = PD_{\blue{T}}(C \cup \{3\})$ for all $C \subseteq [4] \setminus \{1,3\}$. By contrast, for  FP, we have $FP_T(1)=FP_T(3)=4/3$.}
	\label{fig6}
\end{figure}


\section{Diversity indices for unrooted trees}
\label{sec_un}

We now consider phylogenetic diversity indices for unrooted trees. An unrooted tree $T$ with leaf set $X$  is said to be an unrooted phylogenetic $X$--tree if each non-leaf vertex is unlabelled and has degree at least 3 (two such trees are considered equivalent if there is a
graph isomorphism between them that sends leaf $x$ to leaf $x$ for each $x\in X$).  In the case where all non-leaf vertices  in $T$ have degree exactly equal to 3, $T$ is said to be {\em binary}. Background on the basic combinatorics of unrooted phylogenetic trees can be found in  \cite{Book_Mike}.

Let $T$ be an unrooted phylogenetic tree (not necessarily binary) with leaf set $X=[n]$ and let all edges $e$ have non-negative edge lengths $l(e)$. For a subset $Y$ of the leaves, the (unrooted) phylogenetic diversity of $Y$ is defined as the sum of the edge lengths of the minimal subtree  connecting the leaves in $Y$. Note that $PD_{\blue{T}}(\{i\})=0$ for all $i \in [n]$  and $PD_{\blue{T}}([n])$  is the total sum of edge lengths of $T$ (i.e $\sum_e l(e)$). For a leaf $i \in [n]$ and an edge $e$ of $T$, let $I(T; i, e)$ be the set of interior vertices of $T$ in the path in $T$ from $i$ to edge $e$ (including the first vertex of $e$ that is reached, but not the second),  and for each vertex $v$ of $T$ let $d(v)$ denote the degree of $v$. 
For each edge $e$ of $T$, let 
\begin{equation}
\label{Def_mu}
\mu(i,e)= \frac{1}{2} \prod\limits_{v \in I(T; i,e)} \frac{1}{d(v)-1}, 
\end{equation}
 where we adopt the convention that if  $I(T; i, e) = \emptyset$ (i.e. $e$ is a pendant edge incident with leaf $i$) then $\prod\limits_{v \in I(T; i,e)} \frac{1}{d(v)-1}=1$ and hence
$\mu(i,e) =1/2$.

\subsection{Unrooted Equal Splits}
In this section, we develop a version of Equal Splits for unrooted trees. Recall that for rooted trees, the definition of the ES index is $ES_T(i) = \sum\limits_{e \in P(T; \rho, i)} \frac{1}{\Pi(e,i)}l(e)$, where $\Pi(e,i) = 1$ if $e$ is a pendant edge incident with $i$; otherwise, if $e=(u,v)$ is an interior edge, then $\Pi(e, i)$ is the product of the out-degrees of the interior vertices on the directed path 
from $v$ to leaf $i$.

This definition does not directly apply to unrooted trees, since there is no reference root vertex $\rho$ in an unrooted tree. Moreover, introducing  a phantom root vertex in an unrooted  tree results in different ES index values, depending on where the phantom root is inserted. Nevertheless, we can define a canonical unrooted version of ES that is a diversity index as follows.

Let $$\varphi_{\rm ES}(i)= \sum_{e} \mu_{\rm ES}(i,e) l(e),$$ where the summation is over all edges of $T$ and where 
$$\mu_{\rm ES}(i, e) =
\begin{cases}
1,  &  \mbox{ if $e$ is a pendant edge incident with $i$}; \\
\mu(i,e) , & \mbox{ if $e=\{u,v\}$ is an interior edge of $T$};\\
0,  & \mbox{ otherwise}. \\
\end{cases}
$$
Note that $\mu(i,e)$ is the expression introduced in Eqn. \eqref{Def_mu}. 
Moreover, note that in contrast to the rooted setting, $\varphi_{\rm ES}(i)$ is defined as a sum over \emph{all} edges of $T$ and not only over edges on a certain path in $T$. In fact, even though pendant edges not incident with leaf $i$ do not contribute to $\varphi_{\rm ES}(i)$ (since $\mu(i,e)=0$ in that case), the edges that do contribute do not necessarily form a path in $T$ (cf. Fig. \ref{fig7}).

\begin{theorem} \label{Thm_unrootedES}
For any unrooted phylogenetic tree $T$, $\varphi_{\rm ES}$ is a diversity index for $T$. In other words:
\begin{equation}
\label{unrooted_index}
\sum_{i\in [n]} \varphi_{\rm ES}(i)=PD_{\blue{T}}([n]).
\end{equation}
\end{theorem}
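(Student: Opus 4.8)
The plan is to show that summing $\varphi_{\rm ES}(i)$ over all leaves $i$ reproduces $PD_T([n]) = \sum_e l(e)$ by verifying the coefficient condition analogous to Eqn.~\eqref{lindiv2}, i.e. that for every edge $e$ of $T$ one has $\sum_{i \in [n]} \mu_{\rm ES}(i,e) = 1$. Since $\varphi_{\rm ES}(i) = \sum_e \mu_{\rm ES}(i,e) l(e)$ is linear in the edge lengths, interchanging the order of summation gives $\sum_{i} \varphi_{\rm ES}(i) = \sum_e \bigl( \sum_i \mu_{\rm ES}(i,e) \bigr) l(e)$, so the theorem follows once the bracketed sum equals $1$ for each $e$.

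First I would dispose of the easy case: if $e$ is a pendant edge incident with leaf $i_0$, then $\mu_{\rm ES}(i_0,e) = 1$ and $\mu_{\rm ES}(i,e) = 0$ for all other $i$ (an interior edge contribution would require $e$ to lie on a leaf-to-$e$ path with $e$ not pendant at that leaf, but a pendant edge of $T$ is pendant at exactly one leaf and not on the path from any other leaf to itself as an interior edge — more carefully, $\mu_{\rm ES}(i,e)=0$ for $i \neq i_0$ by the ``otherwise'' clause), so the sum is $1$. The substantive case is an interior edge $e = \{u,v\}$. Here $\mu_{\rm ES}(i,e) = \mu(i,e) = \tfrac12 \prod_{w \in I(T;i,e)} \tfrac{1}{d(w)-1}$, and each leaf $i$ contributes according to the side of $e$ it lies on and the vertices its path to $e$ passes through. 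I would fix $e$, let $T_u$ and $T_v$ be the two components obtained by deleting $e$ (with $u \in T_u$, $v \in T_v$), and argue that $\sum_{i \in T_u \cap [n]} \mu(i,e) = \tfrac12$ and likewise for $T_v$, giving the total $1$. To prove $\sum_{i \in T_u \cap [n]} \mu(i,e) = \tfrac12$, root $T_u$ at $u$; every leaf $i$ on that side has $I(T;i,e)$ equal to the set of interior vertices on the path from $i$ to $u$, with $u$ included and its degree counted (note $d(u)-1$ equals the number of edges of $T_u$ at $u$, i.e. the out-degree of $u$ once rooted, since one of $u$'s $d(u)$ edges is $e$ itself). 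So the claim reduces to: for a rooted tree rooted at $u$, $\sum_{\text{leaves } i} \tfrac12 \prod_{w \text{ on path } u \to i, \text{ interior}} \tfrac{1}{(\text{out-degree of } w)} = \tfrac12$ — equivalently that the products $\prod \tfrac{1}{\text{out-deg}}$ over root-to-leaf paths sum to $1$. This last identity I would prove by a straightforward induction on the rooted tree (or on its height): the root has some out-degree $d$, its $d$ subtrees each contribute their internal sum of $1$ by induction, each scaled by $\tfrac1d$, total $1$; the base case is a single leaf (empty product $= 1$). One subtlety to handle cleanly: if a leaf $i$ on the $T_u$ side is adjacent to $u$, then $I(T;i,e)$ still contains $u$ (the first vertex of $e$ reached is $u$, which is included), so $\mu(i,e) = \tfrac{1}{2(d(u)-1)}$, consistent with the path-product formula with the single interior vertex $u$ — I would make sure the convention in Eqn.~\eqref{Def_mu} lines up with this, and that the pendant-edge-incident-with-$i$ special value $\mu_{\rm ES}(i,e)=1$ is genuinely irrelevant here since $e$ is interior.

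The main obstacle — really the only place care is needed — is bookkeeping the definition of $I(T;i,e)$ correctly: which endpoint of $e$ is ``first reached'', whether it is included, and translating the unrooted quantity $\mu(i,e)$ into an out-degree product on a rooted half-tree so that the clean induction applies. Once that dictionary is set up, the induction giving $\sum_{\text{root-to-leaf paths}} \prod \tfrac{1}{\text{out-deg}} = 1$ is routine, and applying it to both sides $T_u$ and $T_v$ of $e$ (each contributing $\tfrac12$) closes the argument. Note that this also simultaneously reproves that the rooted ES index satisfies Eqn.~\eqref{lindiv2} (the content deferred to ``Lemma~\ref{lemma_sm}'' in the excerpt), since that is the special case where $u$ is taken to be the parent vertex and $v$ the child.
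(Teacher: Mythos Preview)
Your proposal is correct and follows essentially the same route as the paper: reduce to showing $\sum_{i\in[n]}\mu_{\rm ES}(i,e)=1$ edgewise, handle pendant edges by the definition, and for an interior edge split the leaves by the two sides and reduce to the rooted identity $\sum_{\text{leaves}}\prod_{v}\tfrac{1}{d^{-}(v)}=1$ (the paper's Lemma~\ref{lemma_sm} and Corollary~\ref{lemma_sum}). The only cosmetic difference is that the paper proves this rooted identity via a random-walk argument (the product is the probability a root-to-leaf walk reaches that leaf, so the sum is~$1$), whereas you use the equally short induction on the rooted subtree---the paper in fact remarks after Corollary~\ref{lemma_sum} that an inductive proof is also available.
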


In order to prove this theorem, we require the following technical lemmas:

\begin{lemma}
\label{lemma_sm}
Suppose that $T$ is a rooted phylogenetic tree with leaf set $Y$ and root vertex $u$. Let $d^{-}(v)$ denote the out-degree of vertex $v$.
We then have:
$$\sum_{i \in Y} \prod_{v \in I(T; u, i)} \frac{1}{d^{-}(v)} = 1.$$
\end{lemma}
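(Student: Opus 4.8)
The plan is to prove Lemma~\ref{lemma_sm} by induction on the number of leaves of the rooted tree $T$. The statement says that if we weight each leaf $i$ by the product $\prod_{v} \frac{1}{d^-(v)}$ over the interior vertices $v$ on the path from the root $u$ to $i$, then these weights sum to $1$. Intuitively, this is a statement about a probability distribution obtained by a random walk: starting at the root $u$, at each interior vertex pass to each child with probability $\frac{1}{d^-(v)}$; the weight of leaf $i$ is exactly the probability of reaching $i$, and these probabilities must sum to $1$.

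First I would set up the base case: if $T$ has a single leaf (so $u$ is that leaf, or $u$ is an interior vertex with one child which is a leaf, depending on conventions), the empty product or the single factor gives $1$ trivially. For the inductive step, let $u$ be the root with out-degree $d^-(u) = k$, and let $u_1, \ldots, u_k$ be its children. For each $m \in \{1, \ldots, k\}$, let $T_m$ be the subtree rooted at $u_m$ with leaf set $Y_m$, so that $Y = Y_1 \sqcup \cdots \sqcup Y_k$. Every leaf $i \in Y_m$ has $u$ as the first interior vertex on the path from $u$ to $i$ (provided $u$ is interior; if some child is itself a leaf, that child contributes a factor but has no further interior vertices), contributing a factor $\frac{1}{k}$, and then the remaining factors along the path from $u_m$ to $i$ are exactly those counted by the corresponding product in $T_m$.

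Hence I would write
\begin{align*}
\sum_{i \in Y} \prod_{v \in I(T; u, i)} \frac{1}{d^-(v)} &= \sum_{m=1}^{k} \sum_{i \in Y_m} \frac{1}{k} \prod_{v \in I(T_m; u_m, i)} \frac{1}{d^-(v)} \\
&= \frac{1}{k} \sum_{m=1}^{k} \left( \sum_{i \in Y_m} \prod_{v \in I(T_m; u_m, i)} \frac{1}{d^-(v)} \right) = \frac{1}{k} \sum_{m=1}^{k} 1 = 1,
\end{align*}
where the inner sum equals $1$ by the induction hypothesis applied to $T_m$. One must be slightly careful with the convention for $I(T; u, i)$ when $i$ is a child of $u$ that is itself a leaf: in that case $I(T_m; u_m, i)$ is empty and the corresponding product is the empty product $1$, which is consistent. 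I would also handle the degenerate case where the root $u$ has out-degree $1$ (if such trees are permitted) — then $k=1$ and the identity reduces immediately to the subtree case.

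The main obstacle, such as it is, is bookkeeping: making sure the definition of $I(T; u, i)$ (interior vertices on the path, including the first endpoint reached but excluding the last) meshes correctly with the recursion when children are leaves versus interior vertices, and confirming that the decomposition $Y = \bigsqcup_m Y_m$ correctly partitions the products. There is no genuine analytic difficulty; the probabilistic interpretation makes the result transparent, and the induction is routine once the indexing is pinned down.
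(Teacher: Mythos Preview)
Your proof by induction is correct. The decomposition at the root into the subtrees $T_1,\ldots,T_k$ and the factoring out of $\frac{1}{k}$ are exactly right, and your handling of the case where a child $u_m$ is itself a leaf (empty product equals $1$) is the only bookkeeping point, which you address correctly. The worry about roots of out-degree $1$ is moot under the paper's definition of a rooted phylogenetic tree (every non-leaf vertex has out-degree at least $2$), but including it does no harm.

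The paper's own proof takes the probabilistic route you mention only as intuition: it observes directly that $\prod_{v\in I(T;u,i)} \frac{1}{d^-(v)}$ is the probability that a random walk from $u$ (choosing each child uniformly at each step) terminates at leaf $i$, and since the walk terminates at exactly one leaf these probabilities sum to $1$. That argument is a one-liner and avoids any induction; your approach is the explicit unpacking of the same idea, trading brevity for self-containment. Interestingly, the paper itself remarks (after Corollary~\ref{lemma_sum}) that an inductive argument is an alternative, so your route is anticipated there.
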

\begin{proof}
We use a simple probabilistic argument. Consider a random walk, starting from the root vertex $u$ and proceeding towards the leaves. At each interior vertex $v$, one of the $d^{-}(v)$
child vertices of $v$ is chosen uniformly at random (and independently of earlier choices).  In this way, the probability $p_i$ of arriving at leaf $i$ is simply $\prod_{v \in I(T; u, i)} \frac{1}{d^{-}(v)}$. Since we always arrive at one (and only one) leaf of $Y$ by this process, $\sum_{i \in Y} p_i=1$, as required.  
\end{proof} \hfill$\Box$ \\

\begin{corollary} \label{lemma_sum}
Let $T$ be an unrooted phylogenetic tree with leaf set $[n]$ and let $e=\{u,v\}$ be an arbitrary edge of $T$. Let $A$ and $B$ denote the subsets of leaves of $T$ that lie on each side of $e$, with $A$ being closer to $u$ (if $u$ is a leaf, then $A=\{u\}$) and $B$ being closer to $v$ (again, if $v$ is a leaf, then $B=\{v\}$). In this case:
$$ \sum\limits_{i \in A} \prod\limits_{v \in I(T; i, e)} \frac{1}{d(v)-1} = 1 \mbox{ and } \sum\limits_{i \in B} \prod\limits_{v \in I(T; i, e)} \frac{1}{d(v)-1} = 1. $$
\end{corollary}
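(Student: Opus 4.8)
The plan is to deduce the Corollary directly from Lemma~\ref{lemma_sm} by rooting $T$ at a suitable place. Fix the edge $e = \{u,v\}$ and consider the set $A$ of leaves on the $u$-side of $e$. I would form a rooted tree $T_A$ as follows: delete the edge $e$ from $T$, take the connected component containing $u$, and regard $u$ as the root of that component, with all edges directed away from $u$. (If $u$ is itself a leaf, then $A = \{u\}$, the component is a single vertex, the product over the empty vertex set is $1$, and the identity holds trivially; so assume $u$ is interior.) The leaf set of $T_A$ is exactly $A$, and for each leaf $i \in A$ the set of interior vertices on the directed path in $T_A$ from the root $u$ to $i$ is precisely $I(T; i, e)$ — this is just the set of interior vertices of $T$ on the path from $i$ to $e$, which is how $I(T;i,e)$ was defined (including the first endpoint of $e$ reached, namely $u$, but not $v$, which has been severed).

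The one technical point to check is that the out-degree $d^{-}(w)$ of an interior vertex $w$ of $T_A$ equals $d(w) - 1$, where $d(w)$ is its degree in $T$. This holds because every interior vertex $w$ of $T_A$ that lies on some root-to-leaf path has exactly one incoming edge in $T_A$ (the edge toward $u$) and the remaining $d(w)-1$ edges of $T$ at $w$ are all directed outward in $T_A$; none of them is the deleted edge $e$, since $e$ is incident only with $u$ and $v$, and $w \neq v$ while for $w = u$ the edge $e$ was removed so $u$ has out-degree $d(u)-1$ as well. Hence $d^{-}(w) = d(w)-1$ for every $w \in I(T;i,e)$ and every $i \in A$.

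With this identification, Lemma~\ref{lemma_sm} applied to $T_A$ with root $u$ gives
$$\sum_{i \in A} \prod_{w \in I(T;i,e)} \frac{1}{d(w)-1} = \sum_{i \in A} \prod_{w \in I(T_A; u, i)} \frac{1}{d^{-}(w)} = 1,$$
which is the first claimed identity. The second identity follows by the same argument with the roles of $u$ and $v$ (and of $A$ and $B$) exchanged: root the $v$-side component at $v$.

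I do not expect any serious obstacle here; the only thing that needs care is the bookkeeping of which endpoint of $e$ is included in $I(T;i,e)$ and the verification that severing $e$ turns degrees into out-degrees in the right way, both of which are handled above. The argument is essentially a translation of the probabilistic random-walk interpretation of Lemma~\ref{lemma_sm} to the two rooted pieces obtained by cutting $T$ at $e$.
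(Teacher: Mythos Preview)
Your proposal is correct and follows essentially the same approach as the paper: both root the $u$-side component at $u$ after deleting $e$, identify $I(T;i,e)$ with the interior vertices on the root-to-leaf path in the resulting rooted tree $T_A$, observe that out-degrees in $T_A$ equal $d(w)-1$, and then invoke Lemma~\ref{lemma_sm}. You are a bit more explicit than the paper in verifying the out-degree bookkeeping (in particular at the root $u$), but the argument is otherwise the same.
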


\begin{proof}
Clearly, we only have to prove the first statement, so  consider $i \in A$.
If $|A|=1$ (which implies $i=u$), $I(T; i,e) = \emptyset$, and we again adopt the convention that in this case $\prod\limits_{v \in I(T;i,e)} \frac{1}{d(v)-1}=1$. 
In particular, the claimed statement holds for $|A|=1$. 
Next, consider $|A| > 1$. 
Then the expression 
$$ \sum\limits_{i \in A} \prod\limits_{v \in I(T; i, e)} \frac{1}{d(v)-1}$$
can also be written as:
\begin{equation}
\label{unrooted_to_rooted}
\sum_{i \in A}\prod_{v \in I(T_A; u,i)}\frac{1}{d^{-}(v)},
\end{equation}
where $T_A$ is the rooted phylogenetic tree on leaf set $A$ and root vertex $u$ obtained from $T$ by deleting edge $e$ and the subtree of $T$ with leaf set $B$.
The corollary now follows from Lemma~\ref{lemma_sm} by taking $T_A$ as the tree in that lemma, and $Y=A$. 
Note that the statement can alternatively be shown without the use of Lemma \ref{lemma_sm} by using an inductive argument.
\end{proof}
 \hfill$\Box$ \\

\begin{lemma} \label{lemma_eq}
Suppose that the linear equation 
\begin{equation}
\label{linear}
\sum_{e \in E} a(e)l(e) = \sum_{e\in E}  b(e)l(e),
\end{equation} 
with $a(e), b(e) \in \RR$, 
holds for all choices of $l$ of the form $l=l_{e'}$ where $e' \in E$ and 
$$l_{e'}(e)=\begin{cases}
1, & \mbox{ if $e=e'$};\\
0, & \mbox{ if $e \neq e'$}.
\end{cases}
$$
Eqn.~(\ref{linear}) then holds for all choices of $l \in \RR^E$.
\end{lemma}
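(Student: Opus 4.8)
The plan is to prove Lemma~\ref{lemma_eq} by exploiting the linearity of both sides of Eqn.~(\ref{linear}) as functions of the vector $l \in \RR^E$, and the fact that the indicator vectors $\{l_{e'} : e' \in E\}$ form a basis of $\RR^E$. First I would define the two linear functionals $F(l) = \sum_{e \in E} a(e)l(e)$ and $G(l) = \sum_{e \in E} b(e)l(e)$ on the vector space $\RR^E$, and consider their difference $H(l) = F(l) - G(l) = \sum_{e \in E}(a(e)-b(e))l(e)$, which is again linear in $l$. The hypothesis says precisely that $H(l_{e'}) = 0$ for every $e' \in E$.

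Next I would observe that any $l \in \RR^E$ can be written as the linear combination $l = \sum_{e' \in E} l(e')\, l_{e'}$, since evaluating both sides at an arbitrary edge $e$ gives $l(e)$ on the left and $\sum_{e'} l(e')[e=e'] = l(e)$ on the right. Applying the linearity of $H$ then yields $H(l) = \sum_{e' \in E} l(e')\, H(l_{e'}) = \sum_{e' \in E} l(e') \cdot 0 = 0$, so $F(l) = G(l)$ for all $l \in \RR^E$, which is exactly the desired conclusion. Equivalently, and perhaps even more transparently, one notes that plugging $l = l_{e'}$ into Eqn.~(\ref{linear}) directly gives $a(e') = b(e')$ for each $e' \in E$; once all the coefficients agree, the two sides are literally the same expression and hence equal for every $l$.

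This lemma is essentially a restatement of the standard fact that a linear map is determined by its values on a basis, so there is no real obstacle here; the only thing to be careful about is making the bookkeeping explicit, namely writing $l$ in terms of the $l_{e'}$ and invoking linearity (or, in the coefficient-comparison version, simply substituting each basis vector). I would keep the write-up to a couple of sentences along the coefficient-comparison line, since it is the cleanest: for each fixed $e' \in E$, substituting $l = l_{e'}$ into (\ref{linear}) collapses both sums to a single term and gives $a(e') = b(e')$; as this holds for all $e' \in E$, the functions $l \mapsto \sum_e a(e)l(e)$ and $l \mapsto \sum_e b(e)l(e)$ coincide identically on $\RR^E$.
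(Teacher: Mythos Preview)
Your proof is correct and follows essentially the same approach as the paper: both arguments substitute each indicator vector $l_{e'}$ to deduce $a(e')=b(e')$ (equivalently, $c(e'):=a(e')-b(e')=0$) for every $e'\in E$, and then conclude that the two linear expressions agree for all $l\in\RR^E$.
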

\begin{proof}
The proof involves simple linear algebra.
Let $c(e):=a(e)-b(e)$.  Eqn.~(\ref{linear}) can then be rewritten as $\sum_{e \in E} c(e)l(e)=0$. Suppose this equation holds whenever $l=l_{e'}$ (and for each choice of $e'$).  
Then this equation becomes $c(e')\cdot 1 = 0$, and since this holds for all choices of $e'$, all the $c$--coefficients are zero, which gives the result.
\end{proof} \hfill$\Box$ \\
We are now in the position to prove Theorem \ref{Thm_unrootedES}.

\bigskip

{\em Proof of Theorem \ref{Thm_unrootedES}:}
By Lemma~\ref{lemma_eq}, it suffices to establish Eqn.~(\ref{unrooted_index}) when  $l$ assigns length 1 to an arbitrary edge $e'=\{u,v\}$ and 0 to all other edges.   Then $PD_{\blue{T}}([n]) =1$ and the left hand side of Eqn.~(\ref{unrooted_index}) is $\sum_{i \in [n]} \mu_{\rm ES}(i, e')$.
Our aim then is to show that this last quantity is always equal to $\sum_{i \in [n]} \mu_{\rm ES}(i,e')=1$.  This is true by definition of $\mu_{\rm ES}$ when $e'$ is a pendant edge, so we may suppose that $e'$ is an interior edge. In that case, let  $A$ and $B$ denote the subsets of leaves of $T$ that lie on each side of $e'$, with $A$ being closer to $u$ than $v$ and $B$ being closer to $v$ than $u$ (thus $A \cup B = [n]$, $A\cap B=\emptyset$ and 
$|A|, |B| \geq 2$). 
Since $\mu_{\rm ES}(i, e')=\mu(i, e')$ (since $e'$ is an interior edge) we have:
\begin{equation}
\label{twosum}
\sum_{i \in [n]}\mu(i, e') = \frac{1}{2} \left(\sum_{i \in A} 2 
\mu(i, e') + \sum_{i \in B} 2 \mu(i, e')\right),
\end{equation}
and
$$\sum_{i \in A} 2\mu(i, e') = \sum_{i \in A}\prod_{v \in I(T; i,e')}\frac{1}{d(v)-1}=1,$$
where the last equality follows from Corollary \ref{lemma_sum}.
A similar argument shows that
$\sum_{i \in B} 2\mu(i,e')=1$, and so, by Eqn.~(\ref{twosum}), we obtain the required equality:
$$\sum_{i \in [n]} \mu(i, e') = \frac{1}{2}(1+1)=1 = PD_{\blue{T}}([n]).$$ 
\hfill$\Box$

\subsection{A diversity index related to the Pauplin representation of phylogenetic diversity}
$PD_{\blue{T}}([n])$ can also be expressed as a positive linear combination of the pairwise distances $d(i,j) = \sum_{e \in P(T; i,j)} l(e)$ between leaves $i$ and $j$ in various ways, one of them being the following representation described by \citet{Semple2004}:

\begin{align}
PD_{\blue{T}}([n]) &=  \sum\limits_{\{i,j\} \subseteq [n]} \lambda_{ij} \, d(i,j) 
= \frac{1}{2} \sum\limits_{i=1}^n \, \sum\limits_{\substack{j=1 \\ j \neq i}}^n \, \lambda_{ij} \, d(i,j) 
= \sum\limits_{i=1}^n \, \frac{1}{2} \, \sum\limits_{\substack{j=1 \\ j \neq i}}^n \, \lambda_{ij} \, d(i,j), \label{Pauplin}
\end{align}
where $$\lambda_{ij} = \prod\limits_{v \in I(T; i,j)} \frac{1}{d(v)-1},$$ and where $I(T; i,j)$ denotes the set of interior vertices on the path from $i$ to $j$ in $T$.

Although this representation holds for general trees (not only binary ones), for binary trees, this expression is also known as the Pauplin representation of phylogenetic diversity (cf. \citet{Pauplin2000}). In the following section, we will further analyse this representation and suggest that it leads to yet another possible unrooted PD index.
Let $$\varphi_{\rm Pa}(i) = \sum\limits_{e} \mu(i, e) \; l(e),$$
where the summation is over all edges of $T$ and $\mu(i, e)$ is the expression introduced in Eqn. \eqref{Def_mu}.

\begin{theorem}
Let $T$ be an unrooted phylogenetic tree with leaf set $[n]$ and let $i$ be a leaf of $T$. In that case:
 \begin{equation}\label{edge_representation}
	\varphi_{\rm Pa}(i) = \frac{1}{2} \sum\limits_{\substack{j=1 \\ j \neq i}}^n \lambda_{ij} d(i,j).
	\end{equation} 
In other words, $\varphi_{\rm Pa}$ is closely related to the Pauplin representation of PD given in Eqn. \eqref{Pauplin}.
Moreover, $\varphi_{\rm Pa}$ is a diversity index (i.e. $\sum_{i \in [n]} \varphi_{\rm Pa}(i) = PD_{\blue{T}}([n])$).
\end{theorem}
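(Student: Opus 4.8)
The plan is to establish the edge representation \eqref{edge_representation} first, by comparing, for each edge $e$ of $T$, the coefficient of the length $l(e)$ on the two sides; the diversity-index property will then follow almost immediately. On the left, $\varphi_{\rm Pa}(i)=\sum_e\mu(i,e)\,l(e)$, so the coefficient of $l(e)$ is simply $\mu(i,e)$. On the right, I would expand $d(i,j)=\sum_{e\in P(T;i,j)}l(e)$ and interchange the order of summation; since $e$ lies on $P(T;i,j)$ precisely when $j$ is one of the leaves on the far side of $e$ from $i$ --- call this set $B_e$ --- the coefficient of $l(e)$ on the right becomes $\frac12\sum_{j\in B_e}\lambda_{ij}$. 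Thus the whole theorem reduces to the identity $\frac12\sum_{j\in B_e}\lambda_{ij}=\mu(i,e)$ for every edge $e$, after which Lemma \ref{lemma_eq} upgrades the equality of coefficients to the equality \eqref{edge_representation} for all assignments of edge lengths.

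The heart of the argument is a decomposition of $P(T;i,j)$ at $e$. Writing $e=\{u,v\}$ with $u$ on the side of $i$, for $j\in B_e$ the path from $i$ to $j$ breaks into the sub-path from $i$ to $u$, then $e$, then the sub-path from $v$ to $j$; correspondingly the set of interior vertices decomposes as a disjoint union $I(T;i,j)=I(T;i,e)\sqcup I(T;j,e)$, which is exactly what the convention ``include the first endpoint of $e$ reached, not the second'' is designed to produce. Hence $\lambda_{ij}=\big(\prod_{w\in I(T;i,e)}\frac{1}{d(w)-1}\big)\big(\prod_{w\in I(T;j,e)}\frac{1}{d(w)-1}\big)=2\mu(i,e)\cdot\prod_{w\in I(T;j,e)}\frac{1}{d(w)-1}$, and summing over $j\in B_e$ gives $\sum_{j\in B_e}\lambda_{ij}=2\mu(i,e)\sum_{j\in B_e}\prod_{w\in I(T;j,e)}\frac{1}{d(w)-1}=2\mu(i,e)$, where the last equality is Corollary \ref{lemma_sum} (applied to $e$, with $B_e$ being one of its two sides). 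Dividing by $2$ gives the required identity. I would separately check the degenerate situations in which $u$ or $v$ is a leaf: if $e$ is a pendant edge at some leaf $k\neq i$ then $B_e=\{k\}$ and the displayed decomposition is still valid under the convention $\prod_{w\in\emptyset}(\cdot)=1$; and if $e$ is the pendant edge at $i$ itself then $B_e=[n]\setminus\{i\}$, $\mu(i,e)=\frac12$, and the same computation (now using Corollary \ref{lemma_sum} with $A=\{i\}$) again yields $\sum_{j\in B_e}\lambda_{ij}=1$.

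For the ``moreover'' statement, summing \eqref{edge_representation} over $i\in[n]$ gives $\sum_{i\in[n]}\varphi_{\rm Pa}(i)=\frac12\sum_{i=1}^n\sum_{j\ne i}\lambda_{ij}\,d(i,j)$, which by \eqref{Pauplin} equals $PD_{T}([n])$. (Alternatively, without invoking \eqref{Pauplin}, one can argue directly that $\sum_{i\in[n]}\mu(i,e)=1$ for every edge $e$ --- for interior edges this is the computation already performed in the proof of Theorem \ref{Thm_unrootedES}, and for pendant edges it again follows from Corollary \ref{lemma_sum} --- whence $\sum_{i\in[n]}\varphi_{\rm Pa}(i)=\sum_e l(e)=PD_{T}([n])$.) The only real obstacle is the bookkeeping: one must pin down exactly which interior vertices land in $I(T;i,e)$ versus $I(T;j,e)$ and handle the pendant-edge boundary cases carefully; once the clean decomposition $I(T;i,j)=I(T;i,e)\sqcup I(T;j,e)$ is in place, the theorem falls out of Corollary \ref{lemma_sum}.
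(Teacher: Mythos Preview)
Your proposal is correct and follows essentially the same route as the paper: both reduce the identity \eqref{edge_representation} to the per-edge equality $\tfrac12\sum_{j\in B_e}\lambda_{ij}=\mu(i,e)$, proved via the decomposition $I(T;i,j)=I(T;i,e)\sqcup I(T;j,e)$ and Corollary~\ref{lemma_sum}, and both derive the diversity-index claim from \eqref{Pauplin}. The only cosmetic difference is that the paper invokes Lemma~\ref{lemma_eq} at the outset to pass to a single edge of length~$1$, whereas you expand $d(i,j)$ and compare coefficients directly; your subsequent appeal to Lemma~\ref{lemma_eq} is then superfluous (equal coefficients already give equality for all~$l$), but harmless.
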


\begin{proof}
Let $i \in [n]$ be a leaf of $T$. 
By Lemma~\ref{lemma_eq} it suffices to establish Eqn. \eqref{edge_representation} when  $l$ assigns length 1 to an arbitrary edge $e'=\{u,w\}$ and 0 to all other edges. 
Note that the removal of edge $e'$ splits $T$ into two subtrees. Let $C$ (=`close') denote the leaf set of the subtree that contains leaf $i$ and let $F$ (=`far') denote the leaf set of the other subtree. Now, for all leaves $j \neq i$ we clearly have:
	\begin{align*}
	d(i,j) &= \begin{cases}
		0, \text{ if } j \in C, \\
		1, \text{ if } j \in F.
		\end{cases}
	\end{align*}
Thus, we have for the right-hand side of Equation \eqref{edge_representation}
\begin{align*}
\frac{1}{2} \, \sum\limits_{\substack{j=1 \\ j \neq i}}^n \, \prod\limits_{v \in I(T; i,j)} \frac{1}{d(v)-1} \, d(i,j)
= \frac{1}{2} \, \sum\limits_{j \in F} \, \prod\limits_{v \in I(T; i,j)} \frac{1}{d(v)-1}.
\end{align*}
As $e'=\{u,w\}$ lies on the path from $i$ to $j$, the term on the right of this last equation can also be written as:
\begin{align*}
\frac{1}{2} \,  \sum\limits_{j \in F} \, \prod\limits_{v \in I(T; i, e')} \frac{1}{d(v)-1} \, \prod\limits_{v' \in I(T; j, e')} \frac{1}{d(v')-1}
&= \frac{1}{2} \, \prod\limits_{v \in I(T; i, e')} \frac{1}{d(v)-1} \, \sum\limits_{j \in F} \, \prod\limits_{v' \in I(T; j, e')} \frac{1} {d(v')-1} \\
&= \frac{1}{2} \, \prod\limits_{v \in I(T; i, e')} \frac{1}{d(v)-1}, 
\end{align*}
where the last equality follows from applying Corollary \ref{lemma_sum}.
On the other hand, for the left-hand side of Equation \eqref{edge_representation}, we have:
\begin{align*}
\varphi_{\rm Pa}(i) &= \frac{1}{2} \sum\limits_{e} \, \prod\limits_{v \in I(T; i, e)} \frac{1}{d(v)-1} \; l(e) 
= \frac{1}{2} \, \prod\limits_{v \in I(T; i, e')} \frac{1}{d(v)-1}, 
\end{align*}
as edge $e'=\{u,w\}$ has length 1, while all other edges have length 0, which completes the proof of Eqn.~(\ref{edge_representation}).
The claim that  $\varphi_{\rm Pa}$ is a diversity index is now a direct consequence from Eqn.~\eqref{Pauplin}.
\end{proof} \hfill$\Box$ \\

\subsection{Unrooted Fair Proportion}
Similar to the Equal Splits index, the Fair Proportion index has so far only been considered for rooted trees. 
In the following, we suggest two canonical extensions of Fair Proportion to unrooted trees.
Recall that for rooted trees, the definition of FP is $FP_T(i) = \sum\limits_{e \in P(T; \rho, i)} \frac{1}{n(e)} l(e)$, where $n(e)$ is the number of leaves descended from $e$. Note that the removal of edge $e$ splits $T$ into two connected components and $n(e)$ is the number of leaves of $T$ in the connected component that contains $i$. This concept can be extended to unrooted trees as follows.

For a leaf $i \in [n]$ and an edge $e$ of $T$, let $c(i,e)$ denote the size of the set of leaves that lie on the same side of $e$ as $i$. 
Let 
$$ \varphi_{\rm FP}(i) = \frac{1}{2} \sum\limits_{e} \frac{1}{c(i,e)} l(e),$$
and let
$$ \tilde{\varphi}_{\rm FP}(i) = \sum\limits_{e} \mu_{\rm FP}(i,e) l(e),$$
where the summation is over all edges of $T$ and where
$$ \mu_{\rm FP}(i, e) = 
\begin{cases}
1,  &  \mbox{ if $e$ is a pendant edge incident with $i$}; \\
\frac{1}{2 c(i,e)} , & \mbox{ if $e=\{u,v\}$ is an interior edge of $T$};\\
0,  & \mbox{ otherwise}. \\
\end{cases}$$

\begin{theorem} \label{Thm_Unrooted_FP}
For any unrooted phylogenetic tree $T$, $\varphi_{\rm FP}$ and $\tilde{\varphi}_{\rm FP}$ are diversity indices for $T$. In other words, 
\begin{align}
\sum\limits_{i \in [n]} \varphi_{\rm FP}(i) &= PD_{\blue{T}}([n])  \mbox{ and } \label{FP_unrooted} \\
\sum\limits_{i \in [n]} \tilde{\varphi}_{\rm FP}(i) &= PD_{\blue{T}}([n]). \label{TildeFP_unrooted}
\end{align}
\end{theorem}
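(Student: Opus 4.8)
The plan is to mimic the proof of Theorem~\ref{Thm_unrootedES}: reduce to basis edge-length assignments via Lemma~\ref{lemma_eq}, then verify the identity edge-by-edge using Corollary~\ref{lemma_sum}. Concretely, by Lemma~\ref{lemma_eq} it suffices to prove each of Eqn.~\eqref{FP_unrooted} and Eqn.~\eqref{TildeFP_unrooted} when $l$ assigns length~$1$ to a single arbitrary edge $e'=\{u,v\}$ and length~$0$ to every other edge. In that case $PD_T([n])=1$, so the task is to show that $\sum_{i\in[n]}$ of the relevant coefficient of $e'$ equals~$1$.

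First I would treat $\tilde\varphi_{\rm FP}$, which is the more direct analogue. If $e'$ is a pendant edge incident with leaf~$i_0$, then $\mu_{\rm FP}(i,e')=1$ for $i=i_0$ and $0$ otherwise, so the sum is trivially~$1$. If $e'$ is an interior edge, let $A$ and $B$ be the two sides of $e'$ (so $A\cup B=[n]$, $A\cap B=\emptyset$, $|A|,|B|\ge 2$), with $u$ on the $A$ side and $v$ on the $B$ side. For $i\in A$ we have $c(i,e')=|A|$, and for $i\in B$ we have $c(i,e')=|B|$; hence
\[
\sum_{i\in[n]}\mu_{\rm FP}(i,e')=\sum_{i\in A}\frac{1}{2|A|}+\sum_{i\in B}\frac{1}{2|B|}=\frac{|A|}{2|A|}+\frac{|B|}{2|B|}=\frac12+\frac12=1.
\]
This establishes Eqn.~\eqref{TildeFP_unrooted}. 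Here I do not even need Corollary~\ref{lemma_sum}, just the observation that $c(i,e')$ is constant on each side of $e'$; it is essentially the classical ``Fair Proportion splits each edge evenly among its descendants'' argument, applied symmetrically to both halves with the compensating factor~$\tfrac12$.

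Next I would handle $\varphi_{\rm FP}$, where now $e'$ contributes to $\varphi_{\rm FP}(i)$ even when $e'$ is the pendant edge incident with some other leaf. Fix $e'$ and again write $[n]=A\sqcup B$ as the two sides of $e'$ (if $e'$ is pendant, one side is a singleton). For every leaf $i$, the coefficient of $e'$ in $\varphi_{\rm FP}(i)$ is $\tfrac{1}{2c(i,e')}$, and $c(i,e')$ equals $|A|$ for $i\in A$ and $|B|$ for $i\in B$; so the same two-term computation gives $\sum_{i\in[n]}\tfrac{1}{2c(i,e')}=\tfrac12+\tfrac12=1$, which is $PD_T([n])$. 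This proves Eqn.~\eqref{FP_unrooted}. By Lemma~\ref{lemma_eq}, both identities extend from these basis assignments to arbitrary non-negative edge lengths, completing the proof.

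The only subtlety—and the one place to be careful—is the bookkeeping of which leaves lie on which side of $e'$ and the boundary case where $e'$ is pendant (so $|A|=1$ or $|B|=1$); but since $c(i,e')$ is by definition exactly the number of leaves on $i$'s side of $e'$, the constant value $\tfrac{1}{2|A|}$ on $A$ and $\tfrac{1}{2|B|}$ on $B$ holds in all cases, and the argument is uniform. Thus I expect no real obstacle: this theorem is strictly easier than Theorem~\ref{Thm_unrootedES} (no out-degree products to collapse, just a cardinality count), and the proof is a short reduction plus a two-line sum.
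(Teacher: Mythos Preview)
Your proposal is correct and essentially identical to the paper's proof: both reduce via Lemma~\ref{lemma_eq} to a single edge $e'$ of length~1, split $[n]$ into the two sides $A$ and $B$ of $e'$, and observe that $c(i,e')$ is constant on each side so the sum collapses to $\tfrac12+\tfrac12=1$. The only cosmetic difference is that the paper treats $\varphi_{\rm FP}$ first and then notes $\tilde\varphi_{\rm FP}$ follows by the same computation (with the pendant case handled by the definition of $\mu_{\rm FP}$), whereas you do them in the opposite order; neither proof actually invokes Corollary~\ref{lemma_sum}.
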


\begin{proof}
We first establish Eqn. \eqref{FP_unrooted}. 
By Lemma~\ref{lemma_eq}, it suffices to establish Eqn.~(\ref{FP_unrooted}) when  $l$ assigns length 1 to an arbitrary edge $e'=\{u,v\}$ and 0 to all other edges.  Then, $PD_{\blue{T}}([n])=1$ and the left hand side of Eqn. \eqref{FP_unrooted} is $\frac{1}{2} \sum_{i \in [n]} \frac{1}{c(i,e')}$. Now, let $A$ and $B$ denote the subsets of leaves that lie on each side of $e'$ (i.e. $A \cup B = [n]$, $A \cap B = \emptyset$ and $|A|, \, |B| \geq 1$), in which case:
\begin{align*}
\frac{1}{2} \sum_{i \in [n]} \frac{1}{c(i,e')} 
= \frac{1}{2} \left( \sum_{i \in A}  \frac{1}{c(i,e')} + \sum_{i \in B} \frac{1}{c(i,e')} \right)
= \frac{1}{2} \left( \sum_{i \in A}  \frac{1}{|A|} + \sum_{i \in B} \frac{1}{|B|} \right)\\
= \frac{1}{2} \left( 1 + 1 \right) 
= 1 = PD_{\blue{T}}([n]).
\end{align*}
Eqn. \eqref{TildeFP_unrooted} follows from a similar argument by noting that the left hand side of this equation becomes $ \sum_{i \in [n]} \mu_{\rm FP}(i,e')$. If $e'$ is a pendant edge, this quantity is equal to 1 by definition of $\mu_{\rm FP}$ and if $e'$ is an interior edge, the same reasoning as in the proof of Eqn. \eqref{FP_unrooted} establishes $\sum_{i \in [n]} \mu_{\rm FP}(i,e') = 1 = PD_{\blue{T}}([n])$.
\end{proof} \hfill$\Box$ \\

\subsection{Summary of unrooted diversity indices}
In the last sections we have presented canonical extensions of Equal Splits and Fair Proportion to unrooted trees and have also introduced a diversity index closely related to the Pauplin representation of phylogenetic diversity. 
Although  all these indices appear to be new, an unrooted Shapley value has long been known in the literature. In fact, even though the Shapley value is frequently used for rooted trees, it was first defined and introduced for unrooted trees by \citet{Haake2008} and can be expressed as follows:
$$\varphi_{\rm SV}(i) = \sum_{e} \frac{f(i,e)}{n \, c(i,e)} l(e),$$
where the summation is over all edges of $T$, $c(i,e)$ is again the number of leaves that lie on the same side of $e$ as leaf $i$, and $f(i,e)$ is the number of leaves that lie on the other side of $e$ (cf. Theorem 4 in \citet{Haake2008}). Recall that for rooted trees, FP and SV are equivalent, so one might argue that the unrooted SV can be considered an unrooted analogue of FP. It turns out, however, that there exists a natural extension of FP to unrooted trees, that is different from unrooted SV. 

In fact, although all of the unrooted diversity indices discussed above can be expressed as linear functions of the edge lengths $l(e)$ of $T$ with coefficients that are independent of $l$, these coefficients differ among indices (cf. Table \ref{Table_Coefficients}) and the indices are, in general, not equivalent (cf. Figure \ref{fig7}).

\begin{figure}
	\centering
	\includegraphics[scale=1]{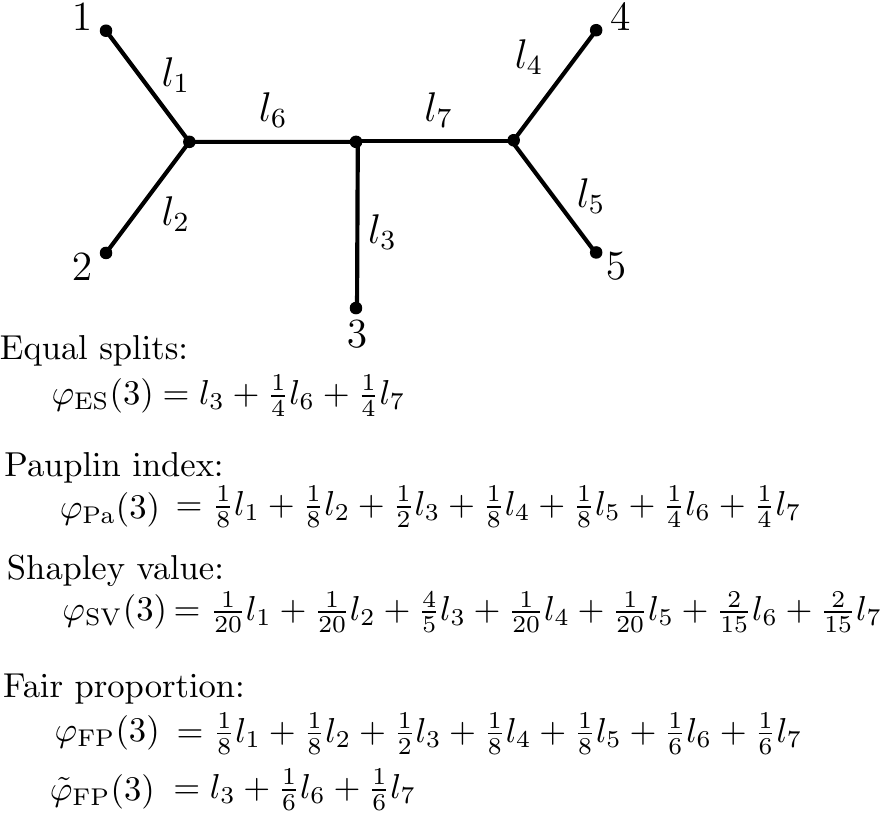}
	\caption{Unrooted binary phylogenetic tree on leaf set $[5]$ and the various unrooted PD indices for leaf $3$.}
	\label{fig7}
\end{figure}

\begin{table}[htbp]
\caption{Coefficients $\gamma_T(i,e)$ used in the calculation of $\varphi(i) = \sum_e \gamma_T(i,e) l(e)$, where $i$ is a leaf and $e$ is an edge of $T$. Moreover, $\mu(i,e)$ is as in Eqn. \eqref{Def_mu}, $c(i,e)$ denotes the number of leaves on the same side of $e$ as leaf $i$ and $f(i,e)$ denotes the number of leaves on the other side of $e$.} 
\label{Table_Coefficients}
\centering
\setlength{\tabcolsep}{2mm}
\renewcommand{\arraystretch}{1.5}
\begin{tabular}{lccc}
\toprule
\hspace*{5mm} & $e$ pendant edge incident with $i$ & $e$ interior edge & $e$ pendant edge not incident with $i$ \\
\midrule
$\varphi_{\rm ES}$ & $1$ & $\mu(i, e)$ & $0$ \\
$\varphi_{\rm Pa}$ & $\mu(i, e)$ & $\mu(i, e)$ & $\mu(i, e)$ \\
$\varphi_{\rm FP}$ & $\frac{1}{2 \, c(i,e)}$ &  $\frac{1}{2 \, c(i,e)}$ &  $\frac{1}{2 \, c(i,e)}$  \\
$\tilde{\varphi}_{\rm FP}$ & $1$ &  $\frac{1}{2 \, c(i,e)}$ &  $0$  \\
$\varphi_{\rm SV}$ & $\frac{f(i,e)}{n \, c(i,e)}$ & $\frac{f(i,e)}{n \, c(i,e)}$ & $\frac{f(i,e)}{n \, c(i,e)}$ \\
\bottomrule
\end{tabular}
\end{table}


\section{Concluding Remarks}

Phylogenetic diversity indices play a key role in biodiversity, so it is helpful to understand how the different indices are related. In this paper, we asked just how different they can be for rooted trees (in an extreme  sense, rather than on average). We also considered how some of the natural indices in the rooted settings extend to the unrooted setting, and further explored  the way in which the Shapley value relates to rooted and unrooted indices.  
Our work suggests two broad questions that may be interesting to explore in future work. First, how do the results in Sections~\ref{sec_diff} and \ref{sec3} extend if we lift the assumption that the underlying trees are binary? Second, for the unrooted indices in Section~\ref{sec_un}, how different can they be from one another (in the sense we considered in Section~\ref{sec_diff}) and for which trees are certain indices identical (in the sense we considered in Section \ref{sec3})?
Moreover, as all unrooted indices apart from the unrooted SV appear to be new, it additionally might be of interest to analyse their biological interpretation and relevance for conservation decisions.

\section{Acknowledgements}
We thank Arne Mooers for a number of helpful suggestions, and the two anonymous reviewers for detailed comments on an earlier version of this manuscript.
We also thank Fran{\c{c}}ois Bienvenu for pointing out an alternative proof of Lemma \ref{Lemma_Convergence}, and Mareike Fischer for helpful comments concerning Section 4. 
The first author also thanks the German Academic Scholarship Foundation for a doctoral scholarship.

\bibliographystyle{model2-names}
\bibliography{wickesteel_final_version} 

\section*{Appendix: Proof of Lemma~\ref{Lemma_Convergence}}

{\em Proof of Lemma~\ref{Lemma_Convergence}:}
We first establish the following identity by application of the `fundamental theorem of calculus'.  
Let $f:[0, h] \rightarrow [0,1]$ be any continuous function and let $c>0$.   
We then have:
\begin{equation}
\label{cal1}
\int_1^h f(x) \cdot e^{-c\int_0^x f(t) dt} dx = \frac{1}{c}\left(\exp(-c 
\int_0^1 f(t)dt) -\exp(-c \int_0^h f(t)dt)\right).
\end{equation}
To establish (\ref{cal1}), let $G(x) =  \exp(-c\int_0^x f(t) dt)$. 
Since $f$ is continuous, $G'(x) =-cf(x)G(x)$, so the left-hand side of Eqn.~(\ref{cal1}) can be written as $\frac{-1}{c} \int_1^h G'(x)dx = \frac{1}{c} (G(1)-G(h)),$ which gives Eqn.~(\ref{cal1}).

Now, for all $x \geq 1$, $\int_0^{x-1} f(t) dt \geq \int_0^{x} f(t) dt - 1$, since $f$ takes values in the interval $[0,1]$,  and thus (\ref{cal1}) gives:
$$
\int_1^h f(x) \cdot e^{-c\int_0^{x-1} f(t) dt} dx \leq 
\frac{e^c}{c}e^{-c\int_0^1 f(t)dt}.
$$
Taking $c=\ln(2)$ in this last inequality gives:
\begin{equation}
\label{cal3}
\int_1^h f(x)\cdot 2^{-\int_0^{x-1} f(t) dt} dx \leq  \frac{2}{\ln 2} 
2^{-\int_0^1 f(t)dt}.
\end{equation}
Let $g$ be a piecewise continuous function that takes the value $x_i$ on the open interval $(i, i+1)$, for each $i=0,1,\ldots, h-1$, and let $f_j,j \geq 1$, be a sequence of continuous functions that converges in the $L^2$ norm to $g$ (e.g. by Fourier series).  As $j \rightarrow\infty$, $\int_1^h f_j(x) \cdot e^{-c\int_0^{x-1} f_j(t) dt} dx$ then converges to
$\sum_{i=1}^h x_i 2^{-\sum_{j<i} x_i}$ and 
$ \frac{2}{\ln 2}\cdot 2^{-\int_0^1 f_j(t)dt}$ converges to $\frac{2}{\ln 2}\cdot2^{-x_0}$.  Inequality~(\ref{cal3}) now establishes the lemma.
\hfill$\Box$
\end{document}